\let\theoremstyle\relax  
\newtheorem{definition}{Definition}
\newtheorem{lemma}{Lemma}
\newtheorem{theorem}{Theorem}
\newtheorem{remark}{Remark}
\newtheorem{assumption}{Assumption}
\theoremstyle{definition}
\newtheorem*{example*}{Example}
\begin{document}

\begin{frontmatter}

\title{Collaborative design of fault diagnosis and fault tolerance control under nested signal temporal logic specifications} 

\author[Paestum]{Penghong Lu},
\author[Paestum]{Gang Chen},
\author[Baiae]{Rong Su}
\address[Paestum]{Shien-Ming Wu School of Intelligent Engineering, South China University Of Technology, Guangzhou, China}  
 \address[Baiae]{School of Electrical and Electronic Engineering, Nanyang Technological University, Singapore}        

\begin{keyword}                           
Nested signal temporal logic; Fault diagnosis and tolerant control; Fault-tolerant control recursive feasibility; Cooperative design.              
\end{keyword}                             

\begin{abstract}                          
Signal Temporal Logic (STL) specifications play a crucial role in defining complex temporal properties and behaviors in safety-critical cyber-physical systems (CPS). However, fault diagnosis (FD) and fault-tolerant control (FTC) for CPS with nonlinear dynamics remain significant challenges, particularly when dealing with nested signal temporal logic (NSTL) specifications. This paper introduces a novel framework for the collaborative design of FD and FTC, aimed at optimizing fault diagnostic performance while ensuring fault tolerance under NSTL specifications. The proposed framework consists of four key steps: (1) construction of the Signal Temporal Logic Tree (STLT), (2) fault detection via the construction of fault-tolerant feasible sets, (3) evaluation of fault detection performance, and (4) synthesis of fault-tolerant control. Initially, a controller for nonlinear systems is designed to satisfy NSTL specifications, and a fault detection observer is developed alongside fault-tolerant feasible sets. To address the challenge of maintaining solution feasibility in dynamic optimization control problems, the concept of fault-tolerant control recursive feasibility is introduced. Subsequently, suboptimal controller gains are derived through a quadratic programming approach to ensure fault tolerance. The collaborative design framework enables more rapid and accurate fault detection while preserving FTC performance. A simulation study is presented to demonstrate the effectiveness of the proposed framework.
\end{abstract}

\end{frontmatter}

\section{Introduction}
Cyber-Physical Systems (CPS) represent a class of complex engineering systems that tightly integrate computation and physical processes to achieve high precision, adaptability, and efficiency \cite{1}. These systems are increasingly being deployed in critical infrastructures, such as intelligent transportation networks, automated production lines, and medical devices, where safety is paramount. To maintain safety, it is essential to continuously monitor the system's operational status and implement timely corrective measures upon detecting deviations from expected behavior \cite{4,5}. They are required to operate safely and reliably, even in the presence of faults or disturbances, highlighting the importance of robust fault diagnosis and fault-tolerant control mechanisms \cite{2,3}.

A key challenge in CPS is the formal specification of system requirements in a structured and verifiable manner. Temporal logics, such as Linear Temporal Logic (LTL) and Signal Temporal Logic (STL), have become essential tools for defining high-level system behaviors over time \cite{6,7,8}. These logics are particularly valuable in autonomous systems due to their systematic approach to specifying complex task requirements. LTL and STL allow system designers to capture desired properties of CPS through formal descriptions of event sequences and signal values.

In particular, STL provides a robust framework for analyzing real-time signals, making it especially useful in the monitoring and control of safety-critical systems. Since its introduction by Maler et al. \cite{9}, STL has seen widespread application in these areas, enabling precise specification of time-sensitive behaviors. The use of temporal logic in CPS is crucial for ensuring system reliability and safety. By allowing real-time monitoring and fault detection based on formal specifications, temporal logics ensure that system behaviors adhere to safety and performance requirements under both normal and faulty conditions. This makes them indispensable for CPS, where failure to promptly detect and correct faults can lead to catastrophic consequences.

In recent years, several online monitoring algorithms have been developed for real-time verification of system behavior against temporal logic specifications \cite{10,11,12}. These algorithms are critical for continuously assessing whether systems adhere to safety requirements. Leucker \cite{13} introduced predictive semantics for monitoring non-black-box systems with non-timed LTL specifications, while Pinisetty et al. \cite{14} proposed a predictive runtime verification framework for systems with timing constraints. More recently, Ferrando et al. \cite{15} extended predictive monitoring to centralized and multi-model scenarios. Yoon et al. \cite{16} introduced a Bayesian intent inference framework that uses robot intent information to predict future positions. For data-driven STL-based predictive monitoring, Qin et al. \cite{17} employed statistical time series analysis to predict future states, and Ma et al. \cite{18} used Bayesian recurrent neural networks to forecast uncertain future states. Lindemann et al. \cite{19} applied conformal prediction to provide probabilistic guarantees,  while Yu et al. \cite{8} used system models to predict future states when evaluating STL formula satisfaction. 

Additionally, there is a significant increase in interest in control techniques for CPS that provide formal guarantees of safety and liveness \cite{20}, where temporal logic is used as the specification language. Rigorous methods for designing, verifying, and implementing CPS controllers have been proposed to ensure that safety and liveness are always met, even in changing or unpredictable environments. Many of the developed methods are based on safety/liveliness filters or automata-based temporal logic approaches \cite{21}. For CPS, there are numerous methods based on safety/liveliness filters, such as methods based on Hamilton-Jacobi reachability analysis \cite{22}, control barrier functions \cite{23,24}, and zonotope-based methods \cite{25}. In literature \cite{26}, the authors explored the use of Hamilton-Jacobi reachability analysis to synthesize control sets that satisfy STL specifications. In literature \cite{27}, the authors explored the application of control barrier functions to efficiently synthesize control strategies for STL fragments. Recently, literature \cite{28,29} introduced a tree-based computational model to address nested temporal operators in the temporal logic specifications, called Temporal Logic Trees, which directly utilizes backward reachability analysis to verify and synthesize control sets for LTL and STL.

While significant progress has been made in online monitoring algorithms and control synthesis techniques for cyber-physical systems (CPS), a critical gap remains in their integration, particularly for fault-tolerant control (FTC) \cite{30,31,32,33} involving temporal logic specifications with nested operators. For example, \cite{10} proposed a model predictive monitoring method based on Signal Temporal Logic (STL), optimizing system behavior through model prediction to satisfy temporal logic specifications. Despite its real-time performance and fault prediction capabilities, this method is limited to non-nested STL specifications, making it inadequate for capturing complex temporal dependencies and dynamic system behaviors with multi-level constraints or nested events. Additionally, the approach suffers from insufficient coupling between fault detection and control strategies, hindering prompt adjustments during faults. In contrast, \cite{8} introduced a continuous-time control synthesis method for nested temporal logic specifications, leveraging Signal Temporal Logic Trees (STLT) to handle complex temporal constraints. While STLT effectively represents nested structures, it struggles with dynamic optimization feasibility, particularly in real-time applications, and lacks robustness in addressing the complexity of nonlinear systems. Consequently, neither approach resolves the FTC problem for nonlinear systems with nested temporal logic specifications, which remains an open and critical challenge in the field. 

The separation of fault diagnosis from control synthesis is a significant limitation, as accurate fault detection is crucial for enabling FTC mechanisms to maintain system stability and performance in the presence of faults. Therefore, integrating fault monitoring with control synthesis through temporal logic frameworks, especially STL, is essential for building CPS that are not only fault-aware but also capable of adapting to ensure continuous safe operation. Addressing this research gap is vital for advancing the resilience and reliability of CPS in safety-critical applications.

In this paper, we present a novel Collaborative Design (CoD) methodology that integrates condition monitoring and control synthesis into a unified framework, using STL as the specification language. This unified approach enables real-time, precise evaluation of system behavior, seamlessly integrating fault detection and control synthesis into a cohesive structure. This approach capitalizes on the strengths of both fault monitoring and control synthesis to enhance system reliability and performance. The key contributions of this work are as follows:
\begin{itemize}
    \item We  extend existing STL-based monitoring techniques to handle systems with nested temporal logic specifications with the help of a practical tool called Signal Temporal Logic Tree (STLT) \cite{8}. This provides a more comprehensive solution for fault management in CPS.
    \item We propose an online monitoring method that incomathcal Xorates system dynamics, enabling more accurate real-time assessments of task completion. This enhances the precision of condition monitoring and fault detection.
    \item  We synthesize   FTC strategies with designed adaptive control barrier functions (CBFs) and their activation time intervals within the STLT framework. This allows for the online synthesis of control strategies that ensure both safety and task completion, while dynamically adapting to system faults.
    \item  To address the challenge of ensuring a feasible solution in dynamic optimization control problems, the concept of fault-tolerant control recursive feasibility is introduced. This concept pertains to the control strategy's ability to maintain the system's controllability in the presence of faults, ensuring global stability and constraint satisfaction under fault conditions. 
\end{itemize}
 
The rest of the paper is structured as follows: Section II presents the problem statement and preliminaries. Section III introduces the proposed CoD algorithm and STLT framework. Section IV provides a simulation example, and Section V concludes with remarks on future research directions.

\section{Problem statement and preliminary}

\subsection{Systems dynamics}

\textbf{Notation.} Define $\mathbb{R}:=(-\infty, \infty)$, $\mathbb{R}_{\ge 0}:=[0, \infty)$. Let $\mathbb{N}$ be the set of natural numbers. Denote $\mathbb{R}^n$ as the $n$ dimensional real vector space, $\mathbb{R}^{n\times m}$ as the $n\times m$ real matrix space. Define $\|x\|$ and $\|A\|$ as the Euclidean norm of vector $x$ and matrix $A$. Given a set $S\subset \mathbb{R}^n$, $\overline{S}$ denotes its complement. Given a vector $x\in \mathbb{R}^n$ and a set $S\subset \mathbb{R}^n$, the distance function is defined as $\texttt{dist}(x, S):=\inf_{y\in S}\|x-y\|$. 

Consider a discrete-time dynamical system of the form
\begin{equation}\label{x0}
x_{k+1} =f(x_k, u_k),\\
\end{equation}
where $x_k:=x(t_k)\in \mathbb{R}^n, u_k:=u(t_k)\in U, k\in \mathbb{N}$ are the state and control input of the system, respectively. The time sequence $\{t_k\}$ is a sequence of sampling instants, which satisfy $t_k\geq 0$. The control input is constrained to a compact set $U\subset \mathbb{R}^m$. The function $f: \mathbb{R}^n\times \mathbb{R}^m\to \mathbb{R}^n$ is the dynamic function of the system in $x$ and $u$. Denote by $\mathcal{U}_{\ge k}$ the set of all control policies that start from time $t_k$.

\subsection{Signal temporal logic}

Previous studies use  STL  formulae with bounded-time temporal operators to describe whether or not the trajectory of the system satisfies some desired high-level properties. STL is a predicate logic consisting of predicates $\mu$, which are defined through a predicate function $g_\mu: \mathbb{R}^n\to \mathbb{R}$ as
\begin{equation*}
	\mu:=\left\{\begin{aligned}
		\top, & \quad \text{if } \quad g_\mu(\bm{x})\ge 0 \\
		\bot, & \quad \text{if } \quad g_\mu(\bm{x})<0.
	\end{aligned}\right.
\end{equation*}

Desired robot behaviors are specified using the following expressive STL syntax:
\begin{equation}\label{Def:PNF}
  \begin{aligned}
&\hspace{0cm}\varphi ::= \top \mid \mu \mid \neg \mu \mid \varphi_1 \wedge \varphi_2 \mid \varphi_1 \vee \varphi_2 \mid  \varphi_1\mathsf{U_{II}} \varphi_2\mid  \mathsf{F_{II}} \varphi\mid  \mathsf{G_{II}} \varphi,
 \end{aligned}
\end{equation}
where $\varphi, \varphi_1, \varphi_2$ are STL formulas, and $\mathsf{II}=[a,b]$ is a time interval, b$>$a $\geq$ 0. Here, $\wedge$ and $\vee$ are logic operators ``conjunction" and ``disjunction", $\mathsf{U}$, $\mathsf{F}$, and $\mathsf{G}$ are temporal operators ``until", ``eventually", and ``always", respectively.

The validity of an STL formula $\varphi$ with respect to a discrete-time
signal $\bm{x}$ at time $t_k$, is defined inductively as follows:
\begin{eqnarray*}
  (\bm{x}, t_k) \vDash \mu &\Leftrightarrow& g_\mu(\bm{x}(t_k))\ge 0, \\
   (\bm{x}, t_k) \vDash  \neg  \mu &\Leftrightarrow& \neg((\bm{x}, t_k) \vDash \mu), \\
   (\bm{x}, t_k) \vDash \varphi_1 \wedge \varphi_2 &\Leftrightarrow& (\bm{x}, t_k) \vDash \varphi_1 \wedge  (\bm{x}, t_k) \vDash \varphi_2, \\
   (\bm{x}, t_k) \vDash \varphi_1 \vee \varphi_2 &\Leftrightarrow& (\bm{x}, t) \vDash \varphi_1 \vee  (\bm{x}, t_k) \vDash \varphi_2, \\
   (\bm{x}, t_k) \vDash \varphi_1 \mathsf{U}_{[a, b]} \varphi_2 &\Leftrightarrow& \exists {t_{k'}}\in [t_k+a, t_k+b]  \ \text{s.t.} \\
   & &(\bm{x}, {t_{k'}}) \vDash \varphi_2 \wedge \\
   & &\forall {t_{k''}}\in [t_k, t_{k'}],(\bm{x}, {t_{k''}}) \vDash \varphi_1, \\
   (\bm{x}, t_k) \vDash \mathsf{F}_{[a, b]} \varphi &\Leftrightarrow& \exists {t_{k'}}\in [t_k+a, t_k+b]  \ \text{s.t.} \\
   & &(\bm{x}, {t_{k'}}) \vDash \varphi, \\
   (\bm{x}, t_k) \vDash \mathsf{G}_{[a, b]} \varphi &\Leftrightarrow& \forall {t'}\in [t_k+a, t_k+b]  \ \text{s.t.} \\
   & &(\bm{x}, {t_{k'}}) \vDash \varphi.
  \end{eqnarray*}

\begin{definition}(Satisfiability)\label{Def:satisfibility}
	Consider the deterministic system (\ref{x0}) and the STL formula $\varphi$ (\ref{Def:PNF}). We say $\varphi$ is satisfiable from the initial state $x_0$ if there exists a control signal $\bm{u}\in \mathcal{U}$ such that
	$({\bm{x}}_{x_0}^{\bm{u}}, 0) \vDash \varphi.$
\end{definition}

Given an STL formula $\varphi$, let
\begin{equation}\label{initialsatisfiableset}
	\mathbb{S}_{\varphi}:=\{x_0\in \mathbb{R}^n| \text{$\varphi$ is (robustly) satisfiable from $x_0$}\}
\end{equation}
denote the set of initial states from which $\varphi$ is (robustly) satisfiable.

\subsection{Reachability operators}
In this section, the study elaborates on two reachability operators. The inherent connection between reachability and temporal operators plays a crucial role in the approach proposed in this paper. The definitions for the maximum and minimum reachable sets are presented below \cite{chen2018signal}.

\begin{definition}\label{Def:maxreachset}
   Consider the system (\ref{x0}), a set $\mathcal{S} \subseteq \mathbb{R}^n$, and a time interval $[a, b]$. The \emph{maximal reachable set} $\mathcal{R}^M(\mathcal{S}, [a, b])$ is defined as
    \begin{multline*}
    \mathcal{R}^M(\mathcal{S},[a, b])=\Big\{x_0\in \mathbb{R}^n: \exists \bm{u}\in \mathcal{U}, \exists t_{k'}\in [a, b], \\
      \text{s.t. }\;  {\bm{x}}_{x_0}^{\bm{u}}(t_{k'})\in \mathcal{S} \Big\}.
    \end{multline*}
\end{definition}

The set $\mathcal{R}^M(\mathcal{S},[a, b])$ collects all states in $\mathbb{R}^n$ from which there exists a control policy $\bm{u}\in \mathcal{U}_{\ge k}$ that drives the system to target set $\mathcal{S}$ at some time instant $t_{k'}\in [a, b]$.

\begin{definition}\label{Def:minreachset}
   Consider the system (\ref{x0}), the set $\mathcal{S}\subseteq \mathbb{R}^n$, and a time interval $[a, b]$. The \emph{minimal reachable set} $\mathcal{R}^m(\mathcal{S},[a, b])$ is defined as
    \begin{multline*}
    \mathcal{R}^m(\mathcal{S},[a, b])=\Big\{x_0\in \mathbb{R}^n: \forall \bm{u}\in \mathcal{U}, \exists t_{k'}\in [a, b], \\
      \text{s.t. }\; {\bm{x}}_{x_0}^{\bm{u}}(t_{k'})\in \mathcal{S} \Big\}.
    \end{multline*}
\end{definition}

 The set $\mathcal{R}^m(\mathcal{S},[a, b])$ collects all states in $\mathbb{R}^n$ from which no matter what input signal $\bm{u}\in \mathcal{U}_{\ge k}$ is applied, the system can reach the target set $\mathcal{S}$ at some time instant $t_{k'}\in [a, b]$.
Define $\mathcal{S}$ as the zero superlevel set of a function: $\mathcal{S}=\{x\in \mathbb{R}^n: h_{\mathcal{S}}(x)\ge 0\}$. Similarly, define $\mathcal{R}^M(\mathcal{S},[a, b])$ and $\mathcal{R}^m(\mathcal{S},[a, b])$ as superlevel sets of some functions, i.e.,  	\begin{equation*}
\begin{aligned}
\mathcal{R}^M(\mathcal{S}, [a, b])&:=\{x: h_{\mathcal{R}^M(\mathcal{S}, [a, b])}(\bm{x})\ge 0\},\\
\mathcal{R}^m(\mathcal{S}, [a, b])&:=\{x: h_{\mathcal{R}^m(\mathcal{S}, [a, b])}(\bm{x})\ge 0\}.		\end{aligned}
\end{equation*} 
The calculation of the maximal and minimal reachable sets can be formulated as an optimal control problem \cite{chen2018signal}, as presented below:
	\begin{equation*}
		\begin{aligned}
			h_{\mathcal{R}^M(\mathcal{S}, [a, b])}&(x)=\max_{\bm{u}\in \mathcal{U}}\max_{s\in [a, b]}h_{\mathcal{S}}(\bm{x}_{x}^{\bm{u}}(s)),\\
			h_{\mathcal{R}^m(\mathcal{S}, [a, b])}&(x)=\min_{\bm{u}\in \mathcal{U}}\max_{s\in [a, b]}h_{\mathcal{S}}(\bm{x}_{x}^{\bm{u}}(s)).
		\end{aligned}
	\end{equation*}
In the following, relations are established between the STL temporal operators and the maximal/minimal reachable sets. 

\begin{lemma}
Consider the system (\ref{x0}) and predicates $ \mu_1,$ and $ \mu_2$. Then, one has \cite{chen2018signal}
	\begin{itemize}
		\item $\mathbb{S}_{\mu_1 \mathsf{U}_{[a, b]} \mu_2}(t_k)= \mathcal{R}^M(\mathbb{R}^n, \mathbb{S}_{\mu_2},\mathbb{S}_{\mu_1}, [a, b])$;
		\item $\mathbb{S}_{\mathsf{F}_{[a, b]} \mu_1}(t_k)= \mathcal{R}^M(\mathbb{R}^n, \mathbb{S}_{\mu_1}, [a, b])$;
		\item $\mathbb{S}_{\mathsf{G}_{[a, b]} \mu_1}(t_k)=  \overline{\mathcal{R}^m(\mathbb{R}^n, \overline{\mathbb{S}_{\mu_1}}, [a, b])}$,
	\end{itemize}
	where $\mathbb{S}_{\mu_1 \mathsf{U}_{[a, b]} \mu_2}, \mathbb{S}_{\mathsf{F}_{[a, b]}\mu_1}$ and $\mathbb{S}_{\mathsf{G}_{[a, b]}\mu_1}$ are the satisfying sets for $\mu_1 \mathsf{U}_{[a, b]} \mu_2$, $\mathsf{F}_{[a, b]}\mu_1$ and $\mathsf{G}_{[a, b]}\mu_1$, respectively.
\end{lemma}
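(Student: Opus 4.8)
The plan is to establish each of the three set identities by directly unfolding the STL satisfaction semantics stated above and matching the resulting quantifier pattern to the definition of the corresponding reachability operator; the set equality then follows either by mutual inclusion or, more efficiently, by noting that the two membership predicates are literally the same formula. Since the dynamics (\ref{x0}) are time-invariant, the satisfying set $\mathbb{S}_\varphi$ depends only on the relative interval $[a,b]$ and not on the absolute time $t_k$, so I would first reduce to $t_k=0$ and measure all times relative to the current instant; this is why the $(t_k)$ argument in the statement can be suppressed.

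For $\mathsf{F}_{[a,b]}\mu_1$: by Definition~\ref{Def:satisfibility}, $x_0\in\mathbb{S}_{\mathsf{F}_{[a,b]}\mu_1}$ iff there is $\bm{u}\in\mathcal{U}$ with $(\bm{x}_{x_0}^{\bm{u}},0)\vDash\mathsf{F}_{[a,b]}\mu_1$, which by the semantics is exactly ``$\exists t_{k'}\in[a,b]$ with $g_{\mu_1}(\bm{x}_{x_0}^{\bm{u}}(t_{k'}))\ge 0$'', i.e.\ $\bm{x}_{x_0}^{\bm{u}}(t_{k'})\in\mathbb{S}_{\mu_1}$. This is verbatim the defining condition of $\mathcal{R}^M(\mathbb{R}^n,\mathbb{S}_{\mu_1},[a,b])$ in Definition~\ref{Def:maxreachset} with target set $\mathbb{S}_{\mu_1}$, giving the second identity immediately.

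For $\mathsf{G}_{[a,b]}\mu_1$: I would argue through the complement. We have $x_0\notin\mathbb{S}_{\mathsf{G}_{[a,b]}\mu_1}$ iff \emph{every} admissible $\bm{u}$ fails to realize $\mathsf{G}_{[a,b]}\mu_1$, i.e.\ for all $\bm{u}\in\mathcal{U}$ there is $t_{k'}\in[a,b]$ with $\bm{x}_{x_0}^{\bm{u}}(t_{k'})\notin\mathbb{S}_{\mu_1}$, equivalently $\bm{x}_{x_0}^{\bm{u}}(t_{k'})\in\overline{\mathbb{S}_{\mu_1}}$. This is precisely the membership condition of $\mathcal{R}^m(\overline{\mathbb{S}_{\mu_1}},[a,b])$ in Definition~\ref{Def:minreachset}. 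Hence $\overline{\mathbb{S}_{\mathsf{G}_{[a,b]}\mu_1}}=\mathcal{R}^m(\mathbb{R}^n,\overline{\mathbb{S}_{\mu_1}},[a,b])$, and taking complements yields the third identity. The point requiring care here is the interaction between the existential quantifier hidden in ``satisfiable'' (there exists a control) and the universal quantifier in the semantics of $\mathsf{G}$ (for all times): negating ``$\exists\bm{u}\,\forall t\,(\cdots)$'' produces ``$\forall\bm{u}\,\exists t\,(\cdots)$'', which is exactly why the minimal reachable set, rather than the maximal one, appears after complementation.

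For $\mu_1\mathsf{U}_{[a,b]}\mu_2$: unfolding the semantics gives that $x_0\in\mathbb{S}_{\mu_1\mathsf{U}_{[a,b]}\mu_2}$ iff there exist $\bm{u}\in\mathcal{U}$ and $t_{k'}\in[a,b]$ such that $\bm{x}_{x_0}^{\bm{u}}(t_{k'})\in\mathbb{S}_{\mu_2}$ and $\bm{x}_{x_0}^{\bm{u}}(t_{k''})\in\mathbb{S}_{\mu_1}$ for all $t_{k''}\in[0,t_{k'}]$. This matches term by term the constrained maximal reachability operator $\mathcal{R}^M(\mathbb{R}^n,\mathbb{S}_{\mu_2},\mathbb{S}_{\mu_1},[a,b])$ of \cite{chen2018signal}, whose defining condition is: reach the target $\mathbb{S}_{\mu_2}$ at some instant in $[a,b]$ while the trajectory stays inside the constraint set $\mathbb{S}_{\mu_1}$ up to that hitting time (within the ambient set $\mathbb{R}^n$). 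I expect this case to be the main obstacle, for two bookkeeping reasons: first, one must verify that the ``stay-in-$\mathbb{S}_{\mu_1}$'' window is $[0,t_{k'}]$ measured from the current time — not $[a,t_{k'}]$ — exactly as dictated by the $\mathsf{U}$ semantics, so the constraint-set argument of $\mathcal{R}^M$ must be interpreted accordingly; second, one must check that the sampling instants over which $t_{k'}$ and $t_{k''}$ range coincide in the STL semantics and in the reachability definitions, so that no spurious mismatch between sampled and continuous time is introduced. Once these are checked, the equality is immediate, completing the proof.
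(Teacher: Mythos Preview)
Your proposal is correct and proceeds exactly as one would expect: you unfold the STL semantics, match the resulting quantifier structure to the membership condition of the appropriate reachable set, and for the $\mathsf{G}$ case you pass to complements so that the $\forall\bm u\,\exists t$ pattern produced by negating satisfiability coincides with the defining pattern of the minimal reachable set. The bookkeeping points you flag for the $\mathsf{U}$ case (the stay-in window being $[0,t_{k'}]$ and the alignment of sampled time instants) are the right places to be careful, but neither hides a real obstacle.

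As for comparison with the paper: the paper does not give a proof of this lemma at all. It is stated with a citation to \cite{chen2018signal} and immediately followed by the next subsection, so there is nothing to compare against. Your direct definition-unfolding argument is the standard way to establish these identities and would be fully adequate here.
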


\subsection{Nested STL formula and STLT}
 Nested STL is an extension of Signal Temporal Logic (STL), which enhances expressiveness, interpretability, and modularity by enabling the specification of hierarchical and complex temporal dependencies in a compact and intuitive manner. Nested STL extends STL by allowing the predicates themselves to be STL formulas with temporal operators, instead of simple atomic expressions.  Examples of nested STL formulas include $\mathsf{F}_{[a, b]}\varphi_1, \label{F}$,$\mathsf{G}_{[a, b]}\varphi_1,\label{G}$ and $\varphi_1\mathsf{U}_{[a, b]}\varphi_2$,  where $\varphi_1$ and at least one of $\varphi_1$ and $\varphi_2$ in   include temporal operators. In addition, $\varphi_1$  and  $\varphi_1, \varphi_2$  are called the \emph{argument(s)}  of the STL formula $\varphi$.

In order to encode the task satisfaction constraint (i.e., $(\bm{x}_{x_0}^{\bm{u_k}}, 0) \models \varphi$) as a set of constraints on the system input $u_k$ when $\varphi$ is nested, the STL tree encoding method proposed in \cite{8} is utilized, where an STLT refers to a tree with linked set nodes and operator nodes. The formal definition is given as follows.

\begin{definition}[STLT \cite{8}]
A \emph{STLT} is a tree for which the next holds:
\begin{itemize} 
\item Each node is either a \emph{set} node, which is a subset of $\mathbb{R}^n$, or an \emph{operator} node, which is an element of $\{\wedge, \vee, \mathsf{U}_{[a, b]},\mathsf{F}_{[a, b]},\mathsf{G}_{[a, b]}\}$;
\item the root node and the leaf nodes are \emph{set} nodes;
\item if a \emph{set} node is not a leaf node, its unique child is an \emph{operator} node;
\item the children of any \emph{operator} node are \emph{set} nodes.
  \end{itemize}
\end{definition}

Then the satisfaction relation between a given trajectory and a complete path of an STLT can be defined as follows.

\begin{definition}\label{Def:PathSaf}
	(Complete path of a STLT \cite{8}) Consider a trajectory $\bm{x}$ and encode a complete path $\bm{p}=\mathbb{X}_0\Theta_1\mathbb{X}_1\Theta_2\ldots \Theta_{N_f} \mathbb{X}_{N_f}$. It can say \emph{$\bm{x}$ satisfies $\bm{p}$ from time $t_k$}, denoted by $(\bm{x}, t_k)\cong \bm{p}$, if there exists a time coding for $\bm{p}$ such that $\underline{t}_{k}=\bar{t}_k=t_k$ and, for $i=1,2, \ldots, N_f $,
	\begin{itemize}
		\item[i)] if $\Theta_i\in \{\wedge, \vee\}$, then $[\underline{t}_{k_i}, \bar{t}_{k_i}]=[\underline{t}_{k_{i-1}}, \bar{t}_{k_{i-1}}]$;
		\item[ii)] if $\Theta_i\in \{\mathsf{U}_{[a, b]}, \mathsf{F}_{[a, b]}\}$, then $\exists t_{k'}\in [a, b]$ s.t. $[\underline{t}_{k_{i}}, \bar{t}_{k_{i}}]={t_{k'}}+[\underline{t}_{k_{i-1}}, \bar{t}_{k_{i-1}}]$;
		\item[iii)] if $\Theta_i=\mathsf{G}_{[a, b]}$, then $[\underline{t}_{k_{i}}, \bar{t}_{k_{i}}]=[a, b]+[\underline{t}_{k_{i-1}}, \bar{t}_{k_{i-1}}]$;and, for $i=0,1, \ldots, N_f $,
		\item[iv)] $\bm{x}(t_k)\in \mathbb{X}_{i}, \forall t_k\in [\underline{t}_{k_{i}}, \bar{t}_{k_{i}]}$.
	\end{itemize}
 where $[\underline{t}_{k_{i}}, \bar{t}_{k_{i}]}$ denote a time interval, $0 \leq \underline{t}_{k_{i}} < \bar{t}_{k_{i}}$ to each set node $\mathbb{X}_{i}$ in the complete path.
\end{definition}

Utilizing the above definition, the satisfaction relation between a trajectory $\bm{x}$ and an STLT can be expressed as follows.

\begin{definition}
    Consider a trajectory $\bm{x}$ and construct an STLT $\mathcal{T}_\varphi$. It can say \emph{$\bm{x}$ satisfies $\mathcal{T}_\varphi$}, denoted by $\bm{x} \vDash \mathcal{T}_\varphi$, if $\bm{x}$ satisfies at one or more complete paths of $\mathcal{T}_\varphi$. {\color{red} }
\end{definition}

The construction of an STLT  based on an STL  formula $\varphi$, as detailed in \cite{8}, involves a systematic procedure rooted in the syntax tree of $\varphi$. Using Algorithm 1 from \cite{8}, the STLT $\mathcal{T}_\varphi$ is generated, where the syntax tree nodes are categorized into \textbf{predicate nodes} (leaf nodes) and \textbf{operator nodes}(non-leaf nodes). Algorithm 2 in \cite{8} evaluates the truth value of the formula over a given trajectory $\bm{x}$. It takes three inputs: a trajectory $\bm{x}$, the STLT $\mathcal{T}_\varphi$, and a time-coding ${t_{\kappa_i}}$, and produces a Boolean result (`true' or `false').  This structured methodology efficiently evaluates STL formulas over given trajectories, leveraging the hierarchical representation of STLTs for interpretability and computational clarity.  The following example illustrates how to construct STLT for an STL formula.

\begin{example*} \label{ex:complete paths}
Given formula  $\varphi=\mathsf{G}_{[0, 16]}\mathsf{F}_{[2, 10]}\mu_1 \vee \mathsf{F}_{[0, 14]}( \mu_2  \mathsf{U}_{[5, 10]}\mu_3)$,  where $\mu_i, i=\{1,2,3\}$ are predicates, it can rewrite $\varphi$ into the desired form $\varphi=\mathsf{G}_{[0, 16]}\mathsf{F}_{[2, 10]}\mu_1 \vee \mathsf{F}_{[0, 14]} \mathsf{G}_{[0, 10]} \mu_2 \wedge \mathsf{F}_{[0, 14]}\mathsf{F}_{[5, 10]} \mu_3$. The constructed STLT $\mathcal{T}_{\hat\varphi}$ is plotted in Fig. \ref{Fig:1}, which follows a bottom-up approach. We begin by constructing the leaf nodes corresponding to the three predicates: $\mathbb{X}_5=\mathbb{S}_{\mu_1}$, $\mathbb{X}_8=\mathbb{S}_{\mu_2}$, $\mathbb{X}_9=\mathbb{S}_{\mu_3}$, and then build upon them one can compute
\begin{eqnarray*}
	&& \hspace{1.5cm}\mathbb{X}_3=\overline{\mathcal{R}^M(\overline{\mathbb{X}_5}, [2, 10])},\\
	&& \hspace{1.5cm}\mathbb{X}_1=\mathcal{R}^m(\mathbb{X}_3, [0, 16]),\\
	&& \hspace{1.5cm}\mathbb{X}_6=\overline{\mathcal{R}^m(\overline{\mathbb{X}_8}, [0, 10])},\\
	&& \hspace{1.5cm}\mathbb{X}_7=\mathcal{R}^M(\mathbb{X}_9, [5, 10]),\\
	&& \hspace{1.5cm}\mathbb{X}_4=\{x: h_{\mathbb{X}_{6}}(x)\ge 0 \wedge h_{\mathbb{X}_{7}}(x)\ge 0\},\\
	&& \hspace{1.5cm}\mathbb{X}_2=\mathcal{R}^M(\mathbb{X}_{4}, [0, 15]),\\
	&& \hspace{1.5cm}\mathbb{X}_0=\{x: h_{\mathbb{X}_{1}}(x)\ge 0 \vee h_{\mathbb{X}_{2}}(x)\ge 0\}.
\end{eqnarray*}
A complete path $\bm{p}$ of an STLT is defined as a path that originates at the root node and terminates at a leaf node.
the STLT $\mathcal{T}_{\hat\varphi}$ (see Fig. \ref{Fig:1}) has     3 complete paths, i.e., 
\begin{align*}
    \bm{p}_1&=\mathbb{X}_0\vee \mathbb{X}_1 \mathsf{G}_{[0, 16]}\mathbb{X}_3\mathsf{F}_{[2, 10]}\mathbb{X}_5, \\
    \bm{p}_2&=\mathbb{X}_0\vee \mathbb{X}_2 \mathsf{F}_{[0, 14]}\mathbb{X}_4\wedge\mathbb{X}_6 \mathsf{G}_{[0, 10]}\mathbb{X}_8, \\
    \bm{p}_3&=\mathbb{X}_0\vee \mathbb{X}_2 \mathsf{F}_{[0, 14]}\mathbb{X}_4\wedge\mathbb{X}_7\mathsf{F}_{[5, 10]}\mathbb{X}_9,
\end{align*}
 
  A trajectory $(\bm{x}, t_k)\cong\mathcal{T}_{\hat\varphi}$ if and only if either of the condition (1) $(\bm{x}, t_k)\cong\bm{p}_1$ is satisfied or  (2) $(\bm{x}, t_k)\cong\bm{p}_2$ and $(\bm{x}, t_k)\cong\bm{p}_3$ are satisfied. 
\end{example*}

\begin{figure} [t]
\centering	\includegraphics[width=0.4\textwidth]{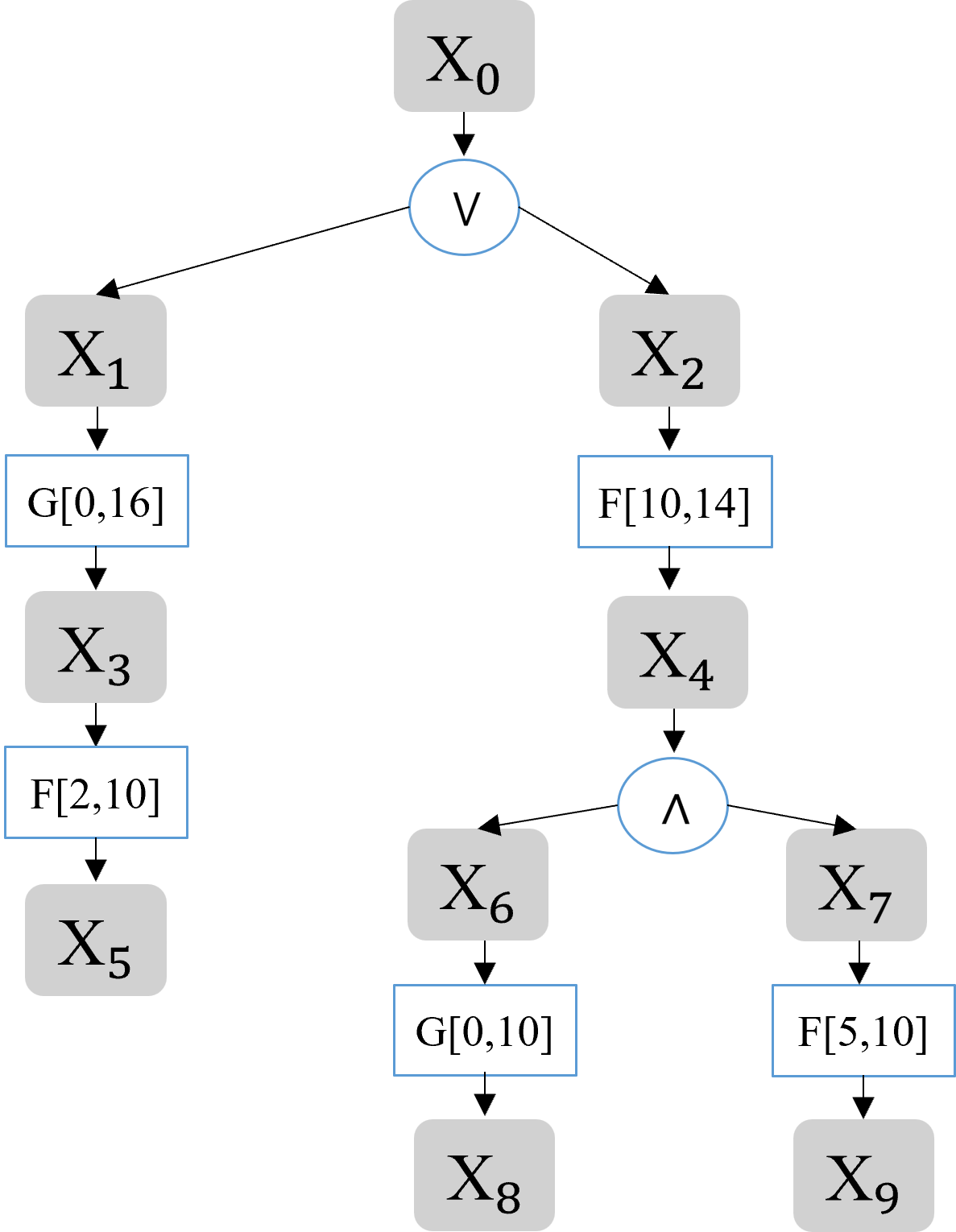}
	\caption{\footnotesize The STLT for the nested STL formula $\varphi=\mathsf{G}_{[0, 16]}\mathsf{F}_{[2, 10]}\mu_1 \vee \mathsf{F}_{[0, 14]}( \mu_2  \mathsf{U}_{[5, 10]}\mu_3)$, where $\mu_i, i=\{1,2,3\}$ are predicates. }
 \label{Fig:1}
\end{figure}
 
\section{Collaborative design of FD and FTC}

In this section, we propose an efficient framework for fault diagnosis (FD) and fault-tolerant control (FTC) in nonlinear systems under nested STL specifications, as illustrated in Fig. \ref{Fig:2}. \textbf{(Step 1)} The framework begins by introducing the Signal Temporal Logic Tree (STLT), a tool designed following the methodology in \cite{8} that transforms the satisfaction of an STL formula into a series of set invariance conditions.  \textbf{(Step 2)} A fault detection observer is constructed to leverage system dynamics for predicting future states during the evaluation of STL formula satisfaction, ensuring the system behavior aligns with desired specifications.  \textbf{(Step 3)} An online monitoring function is employed to assess the performance of the fault detection process in real-time.  \textbf{(Step 4)} Finally, a controller based on control barrier functions (CBFs) is designed. The STLT’s explicit representation of state constraints over specific time intervals provides clear guidelines for constructing the CBF, enabling the system to maintain state invariance and satisfy the STL specifications. The following subsection will introduce the theoretical results for the collaborative design process.

\begin{figure}
\centering	\includegraphics[width=0.4\textwidth]{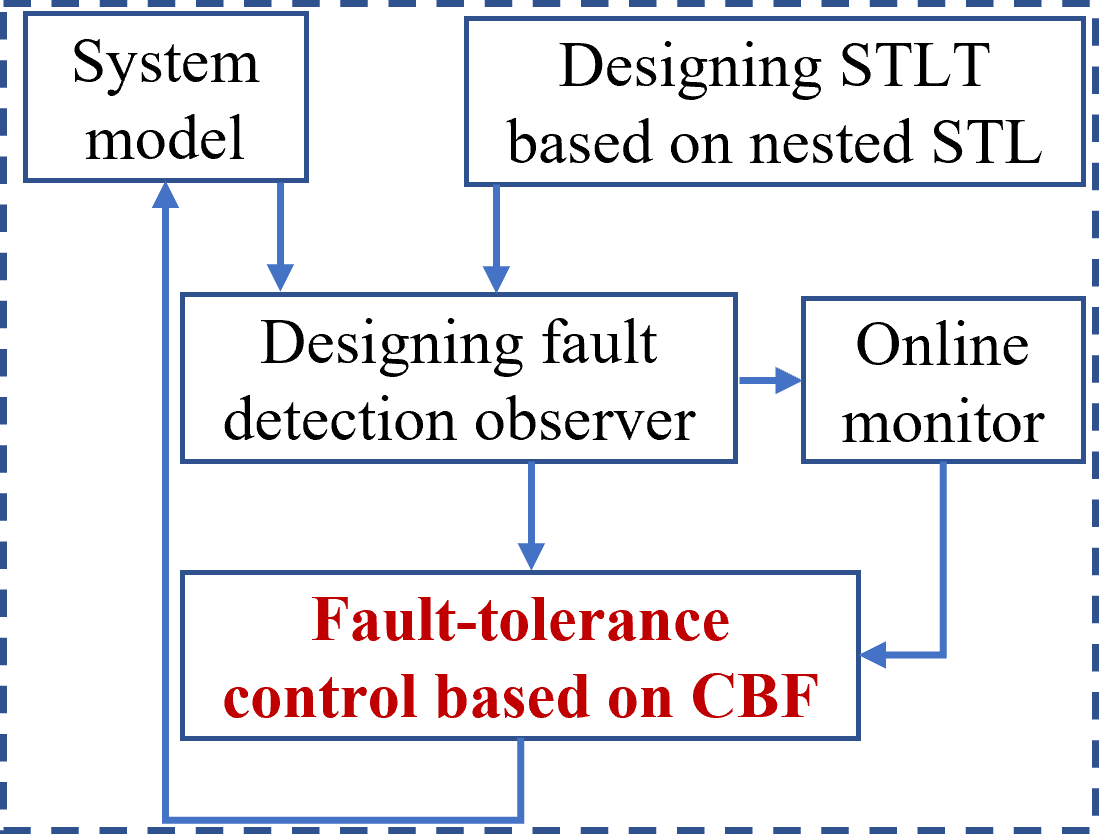}
	\caption{\footnotesize Framework of cooperative design of FD and FTC . }
 \label{Fig:2}
\end{figure}

\subsection{Fault detection observer with fault tolerant feasible sets}

In this work, a fragment of STL formulae is considered where the overall task is expressed as the conjunction of $N$ sub-formulae with nested temporal operators. Formally, an STL formula of the following form is considered
\begin{equation}
    \varphi = \bigwedge_{i=1,...,N} \varphi_i, \quad 
\end{equation}
Let $I = \{1, ... , N\}$ be the index set of sub-formulae. For each time instant $t_k$, we denote by $I_{t_k} = \{i \,|\, a_i \leq t_k \leq b_i\} \subseteq I$ the index set of formulae that are effective at time $t_k$. Similarly, it can define the index sets of sub-formulae effective before and after $t_k$ by $I_{< \,\, t_k} = \{i \,|\, t_k > b_i\}$ and $I_{> \,\, t_k} = \{i \,|\, t_k < a_i\}$, respectively. For each sub-formulae $i \in I$, it can denote by $O_i \in \{U, G, F\}$ the unique temporal operator in $\Phi_i$. We define $I^U_{t_k} = \{i \in I_t \,|\, O_i = U\}$; the same for $I_{t_k}^G$ and $I_{t_k}^F$.

As the system evolves, some sub-formulae may be satisfied or not effective anymore. Throughout the paper, notation $I' \subseteq I$ is used to denote the index set for those subformulae remaining unsatisfied. If $I' \cap I_{< \,\, t_k} = \emptyset$, the $I'$-remaining formula at instant $t_k$ is defined in the form of
\begin{equation}
    \hat{\varphi}_{t_k}^{I'} = \bigwedge_{i \in I' \cap I_{t_k}} \varphi_i^{[t_k,b_i]} \, \wedge \, \bigwedge_{i \in {I'}\cap I_{>t_k}} \varphi_i^{[a_i,b_i]}, \quad 
    \label{eqn:remainformula}
\end{equation}
where $[a_i,b_i]$ is time interval and $\varphi_i^{[t_k,b_i]}$ is attained from $\varphi_i^{[a_i,b_i]}$ by replacing the starting instant of the temporal operator from $a_i$ to $t_k$.

\begin{definition}
Given an STL formula $\varphi$ of form (\ref{eqn:remainformula}), the $I'$-remaining fault tolerant feasible set at instant $t_k$, denoted by $X_{t_k}^{I'}$, is  the set of states from which the system can satisfy the $I'$-remaining formula, i.e.,

\begin{equation}
\begin{aligned}
   X_{t_k}^{I'} = \left\{ x_{t_k} \in X \,\middle|\, \exists u_{t_k:T-1} \in U_{T-t_k}, \right. \\
   \left. \text{s.t.} \, x_{t_k} \xi_f(x_{t_k}, u_{t_k:T-1}) \models \hat{\varphi}_{t_k}^{I'} \right\}.
\end{aligned}
\end{equation}

\end{definition}

Otherwise, if $I' \cap I_{< \,\, t_k} \neq \emptyset$, it means that there are some sub-formulae that should have been satisfied. On this occasion, it can define $X_{t_k}^{I'} = \emptyset$ since the overall task is already failed.

Methods for computing the \( I' \)-remaining fault-tolerant feasible sets \( X_{t_k}^{I'} \) at time \( t_k \) is based on the definition of potential index set in \cite{10} as follows.  

\begin{definition}[Potential index set \cite{10}]  
At each \( t_k \in [0, T] \), a subset \( I' \subseteq I \) is a potential index set if:  
\begin{enumerate}  
    \item \( I_{<t_k} \cap I' = \emptyset \),  
    \item \( I_{>t_k} \subseteq I' \),  
    \item {\( i \in I_{t_k} \), then \( O_i = G \lor [O_i = U \lor O_i = F \land t_k = a_i]\subseteq I' \)}.  
\end{enumerate}  
\end{definition}  

The set of all potential index sets for \( t_k \) is denoted as \( \mathcal{I}_{t_k} \), and the corresponding feasible sets are \( X_{t_k} = \{ X_{t_k}^{I'} \mid I' \in \mathcal{I}_{t_k} \} \). Our goal is to compute \( X_{t_k} \) for all \( t_k \in [0, T] \). 

To compute \( X_{t_k} \),  a recursive backward approach is adopted. Given \( X_{t_k+1} \), each \( X_{t_k}^{I'} \) is determined using the elements in \( X_{t_k+1} \). Note that not all \( I'' \in \mathcal{I}_{t_k+1} \) are valid successors of \( I' \in \mathcal{I}_{t_k} \). The concept of successor sets is introduced to identify valid transitions.  

\begin{definition}[Successor sets \cite{10}]  
A set \( I'' \in \mathcal{I}_{t_k+1} \) is a successor of \( I' \in \mathcal{I}_{t_k} \) if:  
\[
\forall i \in I_{t_k+1} : (O_i = U \land i \notin I') \Rightarrow i \in I''.  
\]  
The set of all successors of \( I' \) is denoted as \( \text{succ}(I', t_k) \subseteq \mathcal{I}_{t_k+1} \).  
\end{definition}  

Note that the successor set of \( I' \) may not be unique in general since for those sub-formulae that have not yet been satisfied in \( I' \) and are still effective at the next instant, they can be either in \( I'' \) or not depending on the current state of the system. Inspired from \cite{10}, it can define the satisfaction sets and consistent regions. For a pair \( (I', I'') \) where \( I'' \in \text{succ}(I', t_k) \), the satisfaction sets and consistent regions are defined to facilitate the transition.  

\begin{definition}[Satisfaction sets and regions]  
For \( (I', I'') \), the satisfaction set is:  
\[
\text{sat}(I', I'') = \{ i \in I' : O_i = U \land i \notin I'' \}.  
\]  
The consistent region is:  
\[
H_{t_k}(I', I'') = \bigcap_{i \in I' \cap {I}_{t_k}} H_i,  
\]  
where:  
\[
H_i =  
\begin{cases}  
H_i^1 \cap H_i^2 & \text{if } i \in \text{sat}(I', I''), \\  
H_i^1 \setminus H_i^2 & \text{if } O_i = U \lor (O_i = F \land i \notin \text{sat}(I', I'')), \\  
H_i & \text{if } O_i = G.  
\end{cases}  
\]  
\end{definition}

Finaly, to compute \( X_{t_k}^{I'} \), two conditions must hold:  
\begin{itemize}  
    \item The state must remain in the consistent region \( H_{t_k}(I', I'') \) at \( t_k \).  
    \item It must transition to \( X_{t_k+1}^{I''} \) in one step to meet future requirements.  
\end{itemize}  

This framework ensures a systematic computation of feasible sets across all time steps and the following result can be obtained.

\begin{theorem}[fault tolerant feasible sets]
     Suppose that \( I' \) is the remaining index set and \( X_{t_k+1} \) is the set of all potential feasible sets at the next time instant. Then the \( I' \)-remaining feasible set \( X_{t_k}^{I'} \) defined in Definition 2 for the time instant \( t_k \) can be computed as follows
\begin{equation}
      X_{t_k}^{I'} = \bigcup_{I'' \in \text{succ}(I', t_k)} (H_{t_k}(I', I'') \cap r(X_{t_k+1}^{I''})) 
      \label{eqn:ftfset}
\end{equation}

where $r(*)$ is the one-step set defined by: for any $s \subseteq X$
\[
    r({s}) = \{x \in {X} \mid \exists u \in \mathsf{U} \text{ s.t. } f(x, u) \in {s}\}
\]

\end{theorem}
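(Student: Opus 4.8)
The plan is to prove the set equality \eqref{eqn:ftfset} by double inclusion, unrolling the definition of $X_{t_k}^{I'}$ one temporal step at a time. The key conceptual device is that the $I'$-remaining formula $\hat{\varphi}_{t_k}^{I'}$ can be decomposed into a part that constrains the state \emph{at} $t_k$ and a part that is itself an $I''$-remaining formula \emph{from} $t_k+1$, for an appropriate successor $I''\in\mathrm{succ}(I',t_k)$. Concretely, for each $i\in I'\cap I_{t_k}$, I would expand the outermost temporal operator $O_i$ at $t_k$: if $O_i=G$, satisfaction of $\varphi_i^{[t_k,b_i]}$ requires $x_{t_k}\in H_i$ (the safe set) and $\varphi_i^{[t_k+1,b_i]}$ from $t_k+1$; if $O_i=F$ or $O_i=U$, either the ``eventually/until'' obligation is discharged now (so $x_{t_k}\in H_i^1\cap H_i^2$, and $i$ leaves the remaining set, i.e.\ $i\in\mathrm{sat}(I',I'')$) or it is postponed (so $x_{t_k}\in H_i^1\setminus H_i^2$ for until, $H_i^1$ trivial for eventually, and $i$ stays). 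Intersecting these per-subformula conditions over $i\in I'\cap I_{t_k}$ gives exactly $H_{t_k}(I',I'')$, and the choice of which postponable obligations are discharged is exactly a choice of $I''\in\mathrm{succ}(I',t_k)$; the residual obligations assemble precisely into $\hat{\varphi}_{t_k+1}^{I''}$.

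For the ``$\supseteq$'' inclusion I would take $x_{t_k}\in H_{t_k}(I',I'')\cap r(X_{t_k+1}^{I''})$ for some successor $I''$. By definition of $r$, there is $u_{t_k}\in U$ with $f(x_{t_k},u_{t_k})=:x_{t_k+1}\in X_{t_k+1}^{I''}$, hence a continuation input $u_{t_k+1:T-1}$ steering the trajectory to satisfy $\hat{\varphi}_{t_k+1}^{I''}$. Prepending $u_{t_k}$ and using membership $x_{t_k}\in H_{t_k}(I',I'')$ to certify the $t_k$-stage constraints of each effective subformula (and noting subformulae in $I'\cap I_{>t_k}$ are not yet active so are untouched), the decomposition above shows the full trajectory from $t_k$ satisfies $\hat{\varphi}_{t_k}^{I'}$; thus $x_{t_k}\in X_{t_k}^{I'}$. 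For ``$\subseteq$'', I would start from $x_{t_k}\in X_{t_k}^{I'}$ with a witness input $u_{t_k:T-1}$ realizing $\hat{\varphi}_{t_k}^{I'}$. Reading off, for each postponable $i$, whether its obligation is met at $t_k$ along this trajectory defines a set $I''$; I must check $I''\in\mathcal{I}_{t_k+1}$ (the three potential-index-set conditions) and $I''\in\mathrm{succ}(I',t_k)$ (any still-effective until-subformula not yet satisfied in $I'$ must be carried into $I''$ — which holds because an until cannot be ``skipped''). Then $x_{t_k}\in H_{t_k}(I',I'')$ by construction, and the one-step-shifted tail trajectory witnesses $x_{t_k+1}=f(x_{t_k},u_{t_k})\in X_{t_k+1}^{I''}$, so $x_{t_k}\in r(X_{t_k+1}^{I''})$.

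The main obstacle is the bookkeeping for the until and eventually operators: getting the set-difference structure $H_i^1\setminus H_i^2$ right, correctly handling the boundary case $t_k=a_i$ (where an until/eventually subformula first becomes active and, per the potential-index-set definition, must enter $I'$), and verifying that the $I''$ extracted in the ``$\subseteq$'' direction genuinely satisfies \emph{all} clauses of the definitions of $\mathcal{I}_{t_k+1}$ and $\mathrm{succ}(I',t_k)$ — in particular that no already-expired subformula sneaks in (condition $I_{<t_k+1}\cap I''=\emptyset$), which follows because along a satisfying trajectory every subformula with deadline $b_i\le t_k$ has been discharged. The conjunctive structure $\varphi=\bigwedge_i\varphi_i$ makes the decomposition clean, since the subformulae do not interact; I would also remark that the degenerate case $I'\cap I_{<t_k}\neq\emptyset$ is consistent with \eqref{eqn:ftfset} because then $\mathrm{succ}(I',t_k)$ forces an empty union, matching $X_{t_k}^{I'}=\emptyset$.
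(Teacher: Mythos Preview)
Your proposal is correct and follows essentially the same strategy as the paper: both arguments decompose $\hat{\varphi}_{t_k}^{I'}$ into a ``now'' part (constraints on $x_{t_k}$, yielding $H_{t_k}(I',I'')$) and a ``later'' part (an $I''$-remaining formula at $t_k+1$, yielding $X_{t_k+1}^{I''}$), with the disjunction over possible successors $I''$ arising from the choice of which until/eventually obligations are discharged at the current step. The paper carries this out by an explicit syntactic rewriting of $\hat{\varphi}_{t_k}^{I'}$ into $\bigvee_{\hat I}\big(\psi_1(\hat I)\wedge\psi_2(\hat I)\big)$ and then matching each disjunct to a term of the union, whereas you phrase it semantically as a double inclusion with witness extraction; the underlying bookkeeping is identical. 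One minor point: the paper also treats the terminal case $t_k=T$ separately (taking $\mathrm{succ}(I',T)=\{\emptyset\}$ by convention), which you do not mention, but this lies outside the recursive step that is the substance of the theorem.
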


\begin{proof} When $t_k = T$,  since succ($I'$, $t_k$) = {$\emptyset$}  and  sat($I'$, $I''$) = sat$_{\cup}$($I'$, $\emptyset$) = i $\in$ $I'$ : $O_i = U$,  from Eq. (7), the results have been obtained
\[
X^{I'}_{t_k} = \bigcap_{i \in \text{sat}(I', I'')} ({H}_{i}^{1} \cap {H}_{i}^{2}) \cap \bigcap_{i \in I' \backslash \text{sat}(I', I'')} {H}_{i},
\]
which is clearly the  $I'$-remaining feasible set of  $\hat{\Phi}^{I'}_{t_k} = \bigwedge_{i \in I'} \Phi_{i}^{[T, T]}$.
\begin{figure*}[ht]
\begin{equation}
\begin{split}
\hat{\Phi}_{t_k}^{I'} &= \bigwedge_{i\in I'\cap{I}_{t_k}}\Phi_i^{[t_k, b_i]}\wedge\bigwedge_{i\in{I}_{>t_k}}\Phi_i^{[a_i, b_i]} \\
&= \bigwedge_{i\in I'_{t_k}\setminus{I}_{t_k+1}}\Phi_i^{[t_k, t_k]}\wedge\bigwedge_{i\in I'_{t_k}\cap{I}_{t_k+1}}\Phi_i^{[t_k, b_i]}\wedge\bigwedge_{i\in{I}_{>t_k}}\Phi_i^{[a_i, b_i]} \\
&= \bigwedge_{i\in I'_{t_k}\setminus{I'}_{t_k+1}}\Phi_i^{[t_k, t_k]}\wedge\bigwedge_{i\in I'^{U}_{t_k}\cap{I}_{t_k+1}}\Phi_i^{[t_k, b_i]}\wedge\bigwedge_{i\in I'^{G}_{t_k}\cap{I}_{t_k+1}}\Phi_i^{[t_k, b_i]}\wedge\bigwedge_{i\in{I}_{>t_k}}\Phi_i^{[a_i, b_i]} \\
&= \bigvee_{\hat{I}\subseteq I'^{U}_{t_k}\cap{I}_{t_k+1}} \left(\underbrace{\bigwedge_{i\in I'_{t_k}\setminus{I}_{t_k+1}}\Phi_i^{[t_k, t_k]}\wedge\bigwedge_{i\in I'^{U}_{t_k}\cap{I}_{t_k+1}\setminus\hat{I}}\Phi_i^{[t_k, t_k]}\wedge\bigwedge_{i\in\hat{I}} G_{[t_k, t_k]} x\in{H}_i^1}_{\psi_1(\hat{I})}\right.\\
&\quad \left.\underbrace{\wedge\bigwedge_{i\in I'^{G}_{t_k}\cap{I}_{t_k+1}} (\Phi_i^{[t_k, t_k]}\wedge\Phi_i^{[t_k+1, b_i]})\wedge \bigwedge_{i\in\hat{I}} \Phi_i^{[t_k+1, b_i]}\wedge \bigwedge_{i\in{I}_{>t_k}}\Phi_i^{[a_i, b_i]} }_{\psi_2(\hat{I})}\right)
\end{split}
\label{eqn:formuladec}
\end{equation}
\end{figure*}

For the case where \( t_k \neq T \), the expression for ${\hat{\Phi}^{I'}}_{t_k}$ can be formulated in accordance with Eq.(\ref{eqn:formuladec}). Here, \( I'_{t_k} \) represents the intersection \( I' \cap {I}_{t_k} \), with \( {I'^G_{t_k}} \) and \( {I'^U_{t_k}} \) denoting the subsets of indices corresponding to the temporal operators ``Always'' and ``Until,'' respectively. Conceptually, Eq.(\ref{eqn:formuladec}) partitions \( I'_{t_k} \) into two distinct components: \( I'_{t_k} \setminus {I}_{t_k+1} \) and \( I'_{t_k} \cap {I}_{t_k+1} \), which refer to the indices of the final and non-final instants, respectively. 

Further, \( I'_{t_k} \cap {I}_{t_k+1} \) can be decomposed based on the specific temporal operators, as illustrated in the second line of Eq.~(\ref{eqn:formuladec}). In the case of ``Until'' sub-formulae, with indices belonging to \( I'^U_{t_k} \cap {I}_{t_k+1} \), the satisfaction can either occur at the current instant or be deferred to the next. It can define \( \hat{I} \) as the set of indices for the sub-formulae that are not satisfied at the current instant, which can be any subset of \( I'^U_{t_k} \cap {I}_{t_k+1} \).

Subsequently, the expression for $\hat{\Phi}^{I'}_{t_k}$ can be refined into the form given in the third line of Eq.~(\ref{eqn:formuladec}), wherein for each possible \( \hat{I} \), the corresponding formula is divided into two components: one that pertains solely to the current state, denoted by \( \psi_1(\hat{I}) \), and another that captures the future temporal requirements, denoted by \( \psi_2(\hat{I}) \).

It is observed that, for any $\hat{I}$, the results have been obtained $x_{t_k} \xi (x_{t_k}, u_{t_k:T-1}) \models \psi_1(\hat{I}) \wedge \psi_2(\hat{I})$ if the following holds:

\begin{itemize}
    \item $x_{t_k} = \psi_1(\hat{I})$,
    \item $x_{t_k+1} \xi (x_{t_k+1}, u_{t_k+1:T-1}) \models \psi_2(\hat{I})$,
\end{itemize}

The first condition holds iff $x_{t_k}$ stays in region $H_{t_k}(I', I'')$ with $I'' = {I}_{>{t_k}} \cup (I'^G_{t_k} \cap {I}_{t_k+1}) \cup \hat{I}$ and $sat(I', I'') = (I'^U_{t_k} \setminus {I}_{t_k+1}) \cup (I'^U_{t_k} \cap {I}_{t_k+1} \setminus \hat{I})$. The second condition holds iff $x_{t_k+1}$ is in $I''$-remaining feasible set $X_{k+1}^{I''}$. The third condition holds iff $x_{t_k} \in r(X_{t_k+1}^{I''})$. Therefore, $\psi(\hat{I}):=\psi_1(\hat{I}) \wedge \psi_2(\hat{I})$ holds iff $x_{t_k}$ is in $H_{t_k}(I', I'') \cap r(X_{t_k+1}^{I''})$. Finally, recall that $\hat{\Phi}_{t_k}^{I'}$ is the disjunction of all possible $\psi(\hat{I})$. This suffices to consider all possible $I'' \in succ(I', t_k)$. Therefore, the results have been obtained $X^{I'}_{t_k} = \bigcup_{I'' \in succ(I', {t_k})} (H_{t_k}(I', I'') \cap r(X_{t_k+1}^{I''}))$ which is the same as Eq.(\ref{eqn:ftfset}), i.e., the theorem is proved.  

\end{proof}

\begin{remark}
In general, the computation of feasible sets necessitates the application of over- or inner-approximation techniques. When feasible sets  \( X_i \) are determined using over-approximation methods, there is a risk of missed alarms, as the resulting sets may include states that are not truly feasible. Conversely, employing inner-approximation techniques may lead to false alarms, as the approximated sets might exclude some feasible states. However, in the context of safety-critical systems, it is preferable to utilize inner-approximations to minimize the risk of missed alarms, thereby ensuring a more conservative and reliable system operation.
\end{remark}

\subsection{Online monitoring}

During the monitoring process, the system's state is observed by a monitor, which dynamically evaluates the satisfaction of STL tasks online, based on the trajectory generated by the system. Specifically, given a STL formula $\varphi$ with time horizon $T$ and a (partial) signal $\mathbf{x}_{0:t} = x_0 x_1 \ldots x_{t_k}$ (also called \textit{prefix}) up to time instant $t_k < T$, it can say $\mathbf{x}_{0:t_k}$ is
\begin{itemize}
    \item \textit{Violated} if $\mathbf{x}_{0:t_k} \xi (\mathbf{x}_{t_k}, \mathbf{u}_{t_k:T-1}) \not\models \Phi$ for any $\mathbf{u}_{t_k:T-1}$;
    \item \textit{Feasible} if $\mathbf{x}_{0:t_K} \xi (\mathbf{x}_{t_k}, \mathbf{u}_{t_k:T-1}) \models \Phi$ for some $\mathbf{u}_{t_k:T-1}$.
\end{itemize}
Then an online monitor is a function
\[
\mathcal{M} : \mathcal{X}^* \to \{0, 1\}
\]
such that, for any prefix $\mathbf{x}_{0:t_k}$, the results have been obtained
\[
\mathcal{M}(\mathbf{x}_{0:t_k}) = 1 \iff \mathbf{x}_{0:t_k} \text{ is violated},
\]
where $\mathcal{X}^*$ denotes the set of all finite sequences over $\mathcal{X}$. In \cite{19}, an efficient algorithm has been introduced for the synthesis of model-based periodic monitors, encompassing both pre-computational offline processes and real-time online execution. In the preparatory offline phase, the algorithm leverages model data to calculate in advance all potential remaining feasible sets anticipated at each time point. Subsequently, during the online phase, the focus shifts to straightforwardly tracking the index set $I'$ that corresponds to the unresolved formulae, and the monitoring decision can be made by checking whether or not $x_t \in X_{t_k}^{I'}$. At this point, we need to implement fault tolerant control with the help of the control synthesis in the next section.

\subsection{CBF time interval encoding}

In this section, the design of the CBF for a given STL formula is summarized. The approach involves designing a time interval encoding and appropriate CBFs to enforce the system trajectory to satisfy the STLT. 
 Before constructing the CBFs, the starting time of each subformula must be defined. While the temporal operator ``Until" is part of the semantics, it is equivalent to a combination of the ``Eventually" and ``Always" operators. For an STL operator \( O \in \{\wedge, \vee, \mathsf{F}_{[a, b]}, \mathsf{G}_{[a, b]}, \mathsf{U}_{[a, b]}\} \), if \( \varphi \) includes the ``Until" operator, such as \( \varphi = \varphi_1 \mathsf{U}_{[a, b]} \varphi_2 \), it is encoded as \( \hat{\varphi} = \mathsf{G}_{[0, b]} \varphi_1 \wedge \mathsf{F}_{[a, b]} \varphi_2 \).

The possible start time (interval) of \( \Theta \), which determines when to evaluate the satisfaction of \( \varphi_1 O \varphi_2 \) or \( O \varphi \), is defined as:

\[
[\underline{t}(\Theta), \bar{t}(\Theta)] := 
\begin{cases}
[0, 0], & \text{if } \Theta \in \{\wedge, \vee\}, \\
[a, b], & \text{if } \Theta \in \{\mathsf{F}_{[a, b]}\}, \\
[a, a], & \text{if } \Theta \in \{\mathsf{G}_{[a, b]}\}.
\end{cases}
\]

For the logical operators \( \wedge \) and \( \vee \), the start time is 0. For \( \mathsf{G}_{[a,b]} \), the start time is \( a \). For \( \mathsf{F}_{[a,b]} \), any time in the interval \( [a, b] \) is possible, so the start time is set to \( [a, b] \).

Following \cite{10}, the duration of \( \Theta \) is defined as:

\[
\mathcal{D}(\Theta) := 
\begin{cases}
0, & \text{if } \Theta \in \{\wedge, \vee, \mathsf{F}_{[a, b]}\}, \\
b - a, & \text{if } \Theta \in \{\mathsf{G}_{[a, b]}\}.
\end{cases}
\]

The root node of \( \mathcal{T}_{\hat{\varphi}} \) is denoted \( \mathbb{X}_{\text{root}} \). Let \( \mathbb{X} \) be the set of all nodes in the STLT \( \mathcal{T}_{\hat{\varphi}} \). For a set node \( \mathbb{X}_i \in \mathbb{X} \), the possible start time (interval) and duration are given by \( [\underline{t}_s(\mathbb{X}_i), \bar{t}_s(\mathbb{X}_i)] \) and \( \mathcal{D}(\mathbb{X}_i) \), respectively. The parent of node \( \mathbb{X}_i \) is denoted \( \text{PA}(\mathbb{X}_i) \), where \( \text{PA}(\mathbb{X}_i) \) is an operator node, and \( \text{PA}(\text{PA}(\mathbb{X}_i)) \) is a set node.

The calculation of the start time (interval) for each set node $\mathbb{X}_i$ is needed for ensuring the satisfaction of the STLT $\mathcal{T}_{\hat\varphi}$. Algorithm 3 in \cite{8} provided a detailed method to obtain the start time. However, due to the uncertainty of the start time for the temporal operator, the start times of some set nodes calculated in \cite{8} may be unknown and belong to an interval. To address this issue, this work introduces the event-triggered scheme for updating the start times, as described in \cite{10}. For each set node $\mathbb{X}_i$ and $\underline{t}_{s}(\mathbb{X}_i)\neq \bar{t}_{s}(\mathbb{X}_i)$, an event is triggered at time $t_k$ if:
\begin{equation}\label{trigger_condition}
    {t_k} \in [\underline{t}_{s}(\mathbb{X}_i)), \bar{t}_{s}(\mathbb{X}_i))] \;\wedge \; \bm{x}(t_k)\in \mathbb{X}_i.
\end{equation}
Once an event is triggered, it can update the start times of the set nodes with algorithm 4 in \cite{8}. Note that a set node's start time is fixed once an event is triggered. 

After the start time is calculated, the STLT  guides the design of CBFs for a given STL formula \( \varphi \). The goal is to determine the time interval and design appropriate CBFs to ensure that the system trajectory satisfies the STLT. Before designing CBFs, we first define the concept of a temporal fragment.

\begin{definition}
   Given the STLT for a nested STL formula $\varphi$, it is necessary to design one CBF for each temporal fragment $f_i$. Denote by $f_i=\Theta_{f_i}\mathbb{X}_{f_i}$, where $\Theta_{f_i}$ and $\mathbb{X}_{f_i}$ are the temporal operator node and the set node contained in $f_i$. Note that $\mathbb{X}_{f_i}$ is represented by its value function $\mathbb{X}_{f_i} = \{x_{t_k}: h_{\mathbb{X}_{f_i}}(x_{t_k})\ge 0 \}$. 
\end{definition}
A temporal fragment $f_j$ is referred to as the \emph{predecessor} of another temporal fragment $f_i$ (or $f_i$ the \emph{successor} of $f_j$) if there exists a complete path $\bm p$ such that $\bm p = ... f_j \bm{p}^\prime f_i ...$ where  $\bm{p}^\prime$ does not contain any temporal fragments. The $f_i$ is termed a \emph{top-layer temporal fragment} if  $f_i$ has no predecessor temporal fragment. 

  Consider a differentiable function $\mathfrak{b_i} : X \times [t_0, t_1] \rightarrow \mathbb{R}$ and the associated set $\mathcal{C}(t_k) := \{ x \in X \mid \mathfrak{b_i}(x, t_k) \geq 0 \}.$ Referring to the content in \cite{8}, the corresponding CBF $\mathfrak{b}_i(x_{t_k}, t_k)$  is required to satisfy the following conditions:

\begin{itemize}
  \item[1)] $\mathfrak{b}_i(x_{t_k}, t_k)$ is continuously differentiable and is defined over $\mathcal{C}(t_k)\times[\min\{{t}_e(\text{PA}(\text{PA}(\mathbb{X}_{f_i}))), \underline{t}_s(\mathbb{X}_{f_i})\}, t_e(\mathbb{X}_{f_i})]$;
  
  \item[2)] $\mathfrak{b}_i(x_{t_k}, t_k) \leq h_{\mathbb{X}_{f_i}}(x), \forall t_k\in [\bar{t}_s(\mathbb{X}_{f_i}), t_e(\mathbb{X}_{f_i})]$,
\end{itemize}
where $t_e(\mathbb{X}_{i})=\bar{t}_s(\mathbb{X}_{i})+\mathcal{D}(\mathbb{X}_i)$ {(recall $\mathcal{D}(\mathbb{X}_i)$ is computed can be intrathecal as the end time of $\mathbb{X}_{i}$. Here, $\text{PA}(\cdot)$ denotes the parent node of a given node in the STLT. Specifically, for a set node $\mathbb{X}_i$, $\text{PA}(\mathbb{X}_{f_i})$ represents its parent operator node.

Define the \emph{time domain} of the CBF $\mathfrak{b}_i(x_{t_k},t_k)$ as
\begin{equation}\label{eq:timedomain}  [\underline{t}_{\mathfrak{b}_i},\bar{t}_{\mathfrak{b}_i}]:=[\min\{{t}_e(\text{PA}(\text{PA}(\mathbb{X}_{f_i}))), \underline{t}_s(\mathbb{X}_{f_i})\}, t_e(\mathbb{X}_{f_i})].
\end{equation}
This is to guarantee that the CBF $\mathfrak{b}_i$, which corresponds to the temporal fragment $f_i$, is activated at  ${t}_e(\text{PA}(\text{PA}(\mathbb{X}_{f_i})))$, for which the activation of the predecessor of $f_i$ ends, or at  $\underline{t}_s(\mathbb{X}_{f_i})$, for which $f_i$ becomes active at its earliest, whichever comes earlier. A formal statement on this is given in Lemma \ref{lem:time_sequence}.

\begin{lemma} \label{lem:time_sequence}
    Let $f_i$ be a non-top-layer temporal fragment, and $f_j$ be the predecessor of $f_i$ in the constructed STLT. Denote their respective CBFs $\mathfrak{b}_j(x, t), \mathfrak{b}_i(x, t)$. Then $ \underline{t}_{\mathfrak{b}_j} \leq  \underline{t}_{\mathfrak{b}_i} \leq \bar{t}_{\mathfrak{b}_j} \leq \bar{t}_{\mathfrak{b}_i}  $.
\end{lemma}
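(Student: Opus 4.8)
The plan is to unpack the four quantities in the chain of inequalities directly from the definition \eqref{eq:timedomain} of the time domain and from the duration/start-time bookkeeping, and then to reduce the claim to a single structural fact about how the start-time interval of a set node relates to the end time of the set node two levels above it in the STLT. Recall that for each temporal fragment $f_i=\Theta_{f_i}\mathbb{X}_{f_i}$ we have set $\underline{t}_{\mathfrak{b}_i}=\min\{t_e(\mathrm{PA}(\mathrm{PA}(\mathbb{X}_{f_i}))),\underline{t}_s(\mathbb{X}_{f_i})\}$ and $\bar{t}_{\mathfrak{b}_i}=t_e(\mathbb{X}_{f_i})=\bar{t}_s(\mathbb{X}_{f_i})+\mathcal{D}(\mathbb{X}_{f_i})$, and similarly for $f_j$. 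The first step is to write out all four endpoints in these terms; the four inequalities then become $(\mathrm{i})$ $\underline{t}_{\mathfrak{b}_j}\le\underline{t}_{\mathfrak{b}_i}$, $(\mathrm{ii})$ $\underline{t}_{\mathfrak{b}_i}\le\bar{t}_{\mathfrak{b}_j}$, $(\mathrm{iii})$ $\bar{t}_{\mathfrak{b}_j}\le\bar{t}_{\mathfrak{b}_i}$.

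For $(\mathrm{ii})$, I would use the fact that $f_j$ is the predecessor of $f_i$, so on the complete path $\bm p=\ldots f_j\bm p' f_i\ldots$ the set node $\mathbb{X}_{f_j}$ is an ancestor of $\mathbb{X}_{f_i}$; more precisely, since $\bm p'$ contains no temporal fragment, $\mathrm{PA}(\mathrm{PA}(\mathbb{X}_{f_i}))$ lies between $\mathbb{X}_{f_j}$ and $\mathbb{X}_{f_i}$ and its end time equals $t_e(\mathbb{X}_{f_j})=\bar{t}_{\mathfrak{b}_j}$ (the intervening $\wedge/\vee$ operators have zero start-offset and zero duration, so they do not change the timing). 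Hence $\underline{t}_{\mathfrak{b}_i}=\min\{t_e(\mathrm{PA}(\mathrm{PA}(\mathbb{X}_{f_i}))),\underline{t}_s(\mathbb{X}_{f_i})\}=\min\{\bar{t}_{\mathfrak{b}_j},\underline{t}_s(\mathbb{X}_{f_i})\}\le\bar{t}_{\mathfrak{b}_j}$, which gives $(\mathrm{ii})$ immediately. For $(\mathrm{iii})$, once we know $t_e(\mathrm{PA}(\mathrm{PA}(\mathbb{X}_{f_i})))=\bar{t}_{\mathfrak{b}_j}$, the start time $\bar{t}_s(\mathbb{X}_{f_i})$ is obtained from that end time by adding the (nonnegative) start-offsets accumulated along $\bm p'$ and through $\Theta_{f_i}$, and then $\bar{t}_{\mathfrak{b}_i}=\bar{t}_s(\mathbb{X}_{f_i})+\mathcal{D}(\mathbb{X}_{f_i})\ge\bar{t}_s(\mathbb{X}_{f_i})\ge t_e(\mathrm{PA}(\mathrm{PA}(\mathbb{X}_{f_i})))=\bar{t}_{\mathfrak{b}_j}$, using $\mathcal{D}\ge0$ and that start-offsets $[\underline t(\Theta),\bar t(\Theta)]$ are nonnegative. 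For $(\mathrm{i})$, $\underline{t}_{\mathfrak{b}_j}=\min\{t_e(\mathrm{PA}(\mathrm{PA}(\mathbb{X}_{f_j}))),\underline{t}_s(\mathbb{X}_{f_j})\}$; since $\mathrm{PA}(\mathrm{PA}(\mathbb{X}_{f_j}))$ is an ancestor of (or equal to a node whose end time precedes) $\mathbb{X}_{f_j}$, and since $\mathbb{X}_{f_j}$ in turn precedes $\mathrm{PA}(\mathrm{PA}(\mathbb{X}_{f_i}))$ on the path, monotonicity of the start-time assignment along a path (each operator node contributes a nonnegative offset) yields $\underline{t}_{\mathfrak{b}_j}\le\underline{t}_s(\mathbb{X}_{f_j})\le\bar{t}_{\mathfrak{b}_j}\le t_e(\mathrm{PA}(\mathrm{PA}(\mathbb{X}_{f_i})))$ and also $\underline{t}_{\mathfrak{b}_j}\le\underline{t}_s(\mathbb{X}_{f_i})$, so $\underline{t}_{\mathfrak{b}_j}\le\min\{t_e(\mathrm{PA}(\mathrm{PA}(\mathbb{X}_{f_i}))),\underline{t}_s(\mathbb{X}_{f_i})\}=\underline{t}_{\mathfrak{b}_i}$.

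The common engine behind all three parts is a monotonicity lemma for the start-time assignment produced by Algorithm 3 of \cite{8}: along any complete path, if $\mathbb{X}$ is an ancestor of $\mathbb{X}'$ then $\underline{t}_s(\mathbb{X})\le\underline{t}_s(\mathbb{X}')$ and $\bar t_s(\mathbb{X})\le\bar t_s(\mathbb{X}')$, and moreover $t_e(\mathbb{X})=\bar t_s(\mathbb{X})+\mathcal D(\mathbb{X})\le\bar t_s(\mathbb{X}')$ when $\mathbb{X}'$ is a strict descendant through at least one temporal operator — this is exactly the content of how the ``always'' duration is threaded through. I expect the main obstacle to be making precise, and justifying, the identity $t_e(\mathrm{PA}(\mathrm{PA}(\mathbb{X}_{f_i})))=t_e(\mathbb{X}_{f_j})$, i.e., that the set node two levels above $\mathbb{X}_{f_i}$ really is $\mathbb{X}_{f_j}$ (equivalently, that the segment $\bm p'$ between consecutive temporal fragments on a complete path consists only of $\wedge/\vee$ operator nodes and therefore has no effect on timing): this requires a careful reading of the STLT construction and of the definition of predecessor/successor temporal fragments, and is where the bulk of the argument's rigor has to go. The arithmetic inequalities themselves are then routine, resting only on $\mathcal D\ge 0$ and on the nonnegativity of every start-offset $[\underline t(\Theta),\bar t(\Theta)]$.
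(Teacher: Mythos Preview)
Your proposal is correct and follows essentially the same approach as the paper's own proof: both arguments unpack the endpoints via \eqref{eq:timedomain}, use the identification $t_e(\mathrm{PA}(\mathrm{PA}(\mathbb{X}_{f_i})))=t_e(\mathbb{X}_{f_j})$, and then verify the three inequalities from the monotonicity of start times along the path together with $\mathcal{D}\ge 0$. You are in fact more explicit than the paper about why that identification holds (the intervening $\wedge/\vee$ nodes in $\bm p'$ carry zero offset and zero duration), a point the paper simply takes for granted when it writes $\underline{t}_{\mathfrak{b}_i}=\min(t_e(\mathbb{X}_{f_j}),\underline{t}_s(\mathbb{X}_{f_i}))$.
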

\begin{proof}
It can be deduced from the tree structure that the predecessor of a non-top-layer temporal fragment is unique. Denote the set nodes in the fragments $f_j$ and  $f_i$ are $\mathbb{X}_{f_j}, \mathbb{X}_{f_i}$, respectively. The inequalities can be obtained as follows: 1) in view of \eqref{eq:timedomain}, $\underline{t}_{\mathfrak{b}_j} \leq  {t}_{e}(\mathbb{X}_{f_j})$ and $\underline{t}_{\mathfrak{b}_j} \leq  \underline{t}_{s}(\mathbb{X}_{f_j}) \leq \underline{t}_{s}(\mathbb{X}_{f_i}) $, thus $\underline{t}_{\mathfrak{b}_j} \leq \underline{t}_{\mathfrak{b}_i} = \min({t}_{e}(\mathbb{X}_{f_j}), \underline{t}_{s}(\mathbb{X}_{f_i}) )$; 2) from \eqref{eq:timedomain}, $\underline{t}_{\mathfrak{b}_i} \leq t_{e}(\mathbb{X}_{f_j})=\bar{t}_{\mathfrak{b}_j} $; 3) and the definition of $t_e(\cdot)$, 
$\bar{t}_{\mathfrak{b}_j} =t_{e}(\mathbb{X}_{f_j}) = \bar{t}_s(\mathbb{X}_j) + \mathcal{D}(\mathbb{X}_j)\leq   \bar{t}_s(\mathbb{X}_i) + \mathcal{D}(\mathbb{X}_i) =  t_{e}(\mathbb{X}_{f_i}) =\bar{t}_{\mathfrak{b}_i}$.
\end{proof}

If $f_i$ is not a top-layer temporal fragment, then the third condition on the corresponding CBF $\mathfrak{b}_i(x_{t_k}, t_k)$ is
\begin{itemize}
  \item[3)] $  \mathfrak{b}_i(x_{t_k},\underline{t}_{\mathfrak{b}_i}) \ge 0, \forall x_{t_k}\in \{x_{t_k}: \mathfrak{b}_j(x_{t_k},\underline{t}_{\mathfrak{b}_i}) \ge 0\}$,
\end{itemize}
  where $f_j$ is the unique predecessor of $f_i$.
Note that $\mathfrak{b}_j(x_{t_k},\underline{t}_{\mathfrak{b}_i})$ is well-defined in view of Lemma \ref{lem:time_sequence}.

Once the CBF is obtained for the nested STL formula, then the control strategy is given by solving a quadratic program
\begin{align}
\min_{u \in U} \quad & u^T Q u \\
\text{s.t.}  & L_f \mathfrak b_i(x_{t_k}) + L_g \mathfrak b_i(x_{t_k})u \geq  -\delta \hspace{1pt} \mathfrak b_i(x_{t_k})
\end{align}
where $\delta$ is a locally Lipschitz continuous class $\mathbb{K}$ function.

\subsection{Model predictive control synthesis with CBF}
To ensure the existence of feasible solutions in the aforementioned dynamic optimization control problems, we introduce the concept of fault-tolerant control recursive feasibility (FTCRF). FTCRF denotes the ability of a control strategy to maintain the system's controllability recursively under fault conditions, ensuring global stability and adherence to constraints. For simplicity and consistency, the notation $t_k$ used earlier will be replaced by $t$ throughout this paper. 

Before formally defining FTCRF, we introduce the concept of periodic safety, which requires the system trajectory to periodically revisit a subset of a forward-invariant safe set. 

\begin{definition}\label{def:period_stab}
Given sets $\mathcal{X}_{T'}, \mathcal{X} \subset \mathbb{R}^{n_x}$, where $\mathcal{X}_{T'} \subset \mathcal{X}$, and a time period $T' > 0$, the set $\mathcal{X}_{T'}$ is periodically safe with respect to the safe set $\mathcal{X}$ for the closed-loop system \eqref{x0} if, for all $x(0) \in \mathcal{X}_{T'}$, the following holds:  
\begin{align} \label{eq:lowLevelCnstr}
    x(iT') \in \mathcal{X}_{T'}, \quad x(t) \in \mathcal{X}, \quad \forall i \in \mathbb{N}, \; \forall t \geq 0.
\end{align}
\end{definition}

Here, $\mathcal{X} \subset \mathbb{R}^{n_x}$ and $\mathcal{X}_{T'} \subset \mathcal{X} \ominus \mathcal{D}$, where $\mathcal{D} = \{x \;|\; \|x\| \leq d\}$ for some $d > 0$. The input constraint set is given by $\mathcal{U} = \{u \;|\; A_u u \leq b_u\}$, where $A_u \in \mathbb{R}^{m \times n_u}$ and $b_u \in \mathbb{R}^m$. The time constant $T'$ is a user-defined parameter specifying the update frequency of the planned trajectory. 

Inspired by \cite{12}, we present a hierarchical strategy comprising two levels. The high-level planner generates a reference trajectory $z(t)$, while a low-level controller tracks this trajectory to ensure that the closed-loop trajectory $x(t)$ satisfies the state constraints. The objectives aim to ensure system safety through the forward invariance of $\mathcal{X}$ and the \textit{periodic fixed-time stability} of $\mathcal{X}_{T'}$. Periodic fixed-time stability mandates that the system trajectory revisits $\mathcal{X}_{T'}$ at discrete times $iT'$, where $i \in \mathbb{R}_{0+}$.  

The concept of periodic safety establishes a critical link between the low-level controller and the high-level planner, ensuring adherence to tracking discrete waypoints within a defined region. This integration is vital for maintaining both efficiency and safety. Subsequently, we define the domain of attraction (DoA):

\begin{definition}[\textbf{Domain of Attraction (DoA)}]\label{def:FT-DoA}
Given a set $\mathcal{C} \subset \mathbb{R}^{n_x}$ and a time $T' > 0$, a set $D_{\mathcal{C}} \subset \mathbb{R}^{n_x}$ is the DoA of $\mathcal{C}$ for the closed-loop system \eqref{x0} if:  
\begin{itemize}
    \item[i)] For all $x(0) \in D_{\mathcal{C}}$, $x(t) \in D_{\mathcal{C}}$ for all $t \in [0, T')$, and  
    \item[ii)] There exists $0 \leq T'_{\mathcal{C}} \leq T'$ such that $\lim_{t \to T'_{\mathcal{C}}} x(t) \in \mathcal{C}$.  
\end{itemize}
\end{definition}

The DoA is critical when the input is constrained to $u \in \mathcal{U}$, as fixed-time convergence cannot be guaranteed for arbitrary initial conditions. To characterize the DoA, we introduce a class of barrier functions called fixed-time  barrier functions:  

\begin{definition}\label{def:FT-barrier}
A continuously differentiable function $\mathfrak{b}_i : \mathbb{R}^{n_x} \to \mathbb{R}$ is a fixed-time barrier function for the set $\mathcal{S} = \{x \;|\; \mathfrak{b}_i(x) \geq 0\}$ with time $T_{\mathcal{S}} > 0$ for the closed-loop system \eqref{x0} if there exist parameters $\delta \in \mathbb{R}$, $\alpha > 0$, $\gamma_1 = 1 + \frac{1}{\mu}$, and $\gamma_2 = 1 - \frac{1}{\mu}$ for some $\mu > 1$ such that:  
\begin{align}\label{eq:FTBarrier}
\dot{\mathfrak{b}}_i(x) \geq -\delta \mathfrak{b}_i(x) + \alpha \max\{0, -\mathfrak{b}_i(x)\}^{\gamma_1} + \alpha \max\{0, -\mathfrak{b}_i(x)\}^{\gamma_2},
\end{align}
for all $x \in D_{\mathcal{S}} \subset \mathbb{R}^{n_x}$, where $T_{\mathcal{S}}$ and $D_{\mathcal{S}}$ depend on $\frac{\delta}{2\alpha}$.
\end{definition}
\begin{lemma}
Using \eqref{eq:FTBarrier}, the set $D_{\mathcal{S}}$ is the DoA of $\mathcal{S}$ with time $T_{\mathcal{S}}$ if:  
{\small
\begin{align*}
    D_{\mathcal{S}} & = \begin{cases}
    \mathbb{R}^{n_x}, & r < 1, \\  
    \left\{x \;|\; \mathfrak{b}_i(x) \geq -k^\mu \left(r - \sqrt{r^2 - 1}\right)^\mu\right\}, & r \geq 1,  
    \end{cases} \\  
    T_{\mathcal{S}} & = \begin{cases}
    \frac{\mu \pi}{2 r}, & r \leq 0, \\  
    \frac{\mu k}{\alpha(1 - k)}, & r \geq 1,  
    \end{cases}
\end{align*}}
where $r = \frac{\delta}{2\alpha}$ and $0 < \texttt{r}, k < 1$. The existence of a barrier function $b$ implies: (i) forward invariance of $D_{\mathcal{S}}$ and (ii) convergence to $\mathcal{S}$ within $T_{\mathcal{S}}$.  
\end{lemma}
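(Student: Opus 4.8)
The plan is to run a comparison-lemma argument on the scalar signal $\beta(t) := \mathfrak{b}_i(x(t))$ evaluated along the closed-loop trajectories of \eqref{x0}, reducing the vector problem to a one-dimensional differential inequality. First I would split the state space into $\mathcal{S} = \{\beta \ge 0\}$ and its complement. On $\mathcal{S}$ the two $\max\{0,-\beta\}$ terms in \eqref{eq:FTBarrier} vanish, so $\dot\beta \ge -\delta\beta$; a comparison argument then gives $\beta(t) \ge \beta(0)e^{-\delta t} \ge 0$, i.e. $\mathcal{S}$ is forward invariant (trivially so if $\delta \le 0$). This already yields half of conclusion (ii): once a trajectory enters $\mathcal{S}$ it never leaves, so it suffices to bound the time to reach $\mathcal{S}$ from $D_{\mathcal{S}}\setminus\mathcal{S}$.

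The crux is the analysis outside $\mathcal{S}$. On $\{\beta < 0\}$ put $w := -\beta > 0$; then \eqref{eq:FTBarrier} becomes $\dot w \le -\delta w - \alpha w^{\gamma_1} - \alpha w^{\gamma_2}$ with $\gamma_1 = 1+\tfrac1\mu$ and $\gamma_2 = 1-\tfrac1\mu$. The key change of variables is $v := w^{1/\mu}$, which, using $\delta/\alpha = 2r$ and $w^{1/\mu-1}w^{1-1/\mu}=1$, collapses the two fractional powers into a single quadratic: $\dot v \le -\tfrac{\alpha}{\mu}\bigl(v^2 + 2rv + 1\bigr)$. One checks that this substitution is legitimate on $\{w>0\}$ and that $\dot v$ stays bounded as $w\to 0^+$, so $v$ extends continuously to the hitting time; and since $w = \max\{0,-\beta\}$ is only locally Lipschitz, the inequality is read in the almost-everywhere / Dini sense, which is all the comparison lemma needs. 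From here everything is governed by the roots of $v^2 + 2rv + 1$, equivalently by the sign of $r^2 - 1$.

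Then I would split on the discriminant. If $r^2 < 1$ (covering $r \le 0$ and $0 < r < 1$), then $v^2 + 2rv + 1 \ge 1 - r^2 > 0$ for all $v$, so $v$ is strictly decreasing and reaches $0$ in finite time from any initial value; separating variables and bounding $\int_0^{v(0)}\frac{dv}{v^2+2rv+1}$ by the convergent improper integral $\int_0^\infty\frac{dv}{v^2+2rv+1}$ (an $\arctan$) gives a bound independent of $v(0)$, hence $D_{\mathcal{S}} = \mathbb{R}^{n_x}$ and $T_{\mathcal{S}}$ equal to the stated closed form. If $r \ge 1$, the quadratic has two negative roots $v_{1,2} = -r \mp \sqrt{r^2-1}$ with $|v_1|\ge 1 \ge |v_2|$ and $|v_1||v_2| = 1$; on the set $\{0 \le v \le k|v_2|\}$, i.e. $\{x : \mathfrak{b}_i(x) \ge -k^\mu(r-\sqrt{r^2-1})^\mu\}$, one has $(v-v_1)(v-v_2) \ge 1$, so this set is forward invariant (it is its own DoA) and $v$ decreases at least at rate $\alpha/\mu$, entering $\mathcal{S}$ within the time obtained from the separable bound, which is the stated $T_{\mathcal{S}}$. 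Forward invariance of $D_{\mathcal{S}}$ in both cases gives conclusion (i), and combining the finite hitting time with forward invariance of $\mathcal{S}$ gives conclusion (ii).

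I expect the main obstacle to be handling the regularity around $\beta = 0$ cleanly: the map $\beta \mapsto \max\{0,-\beta\}$ is nonsmooth and the substitution $v = w^{1/\mu}$ is singular at $w = 0$ (since $1/\mu - 1 < 0$), so the reduction to the Riccati-type inequality, and in particular the claim that $v$ hits zero in finite time with $v$ continuous there, must be justified via one-sided derivatives and a limiting argument on an interval where $\beta < 0$. A secondary, purely computational nuisance is pinning down the constants so that the $\arctan$ / separable integral bounds match the exact closed forms for $T_{\mathcal{S}}$ and $D_{\mathcal{S}}$ written in the statement.
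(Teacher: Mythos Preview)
Your proposal is correct and substantially more detailed than what the paper itself does. The paper's entire proof of this lemma is a one-line deferral: ``The proof of this lemma follows directly from \cite{garg2021characterization}. As the reasoning is relatively straightforward, the detailed steps are omitted for brevity.'' In other words, the authors do not actually carry out any argument; they simply cite the source from which the result is imported.

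Your comparison-lemma reduction, the substitution $v = (-\beta)^{1/\mu}$ collapsing the two fractional powers into the Riccati inequality $\dot v \le -\tfrac{\alpha}{\mu}(v^2 + 2rv + 1)$, and the subsequent discriminant split on $r^2-1$ constitute exactly the standard proof of such fixed-time barrier results, and are almost certainly what appears in the cited reference. The regularity concern you flag at $\beta = 0$ is real but routine (one works on a maximal interval where $\beta<0$ and takes limits), and the constant-matching you mention is indeed just bookkeeping with the $\arctan$ and partial-fraction integrals. So: your route is the right one, and it is not so much ``different from the paper'' as it is the proof the paper declined to write out.
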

\begin{proof}
 The proof of this lemma follows directly from \cite{garg2021characterization}. As the reasoning is relatively straightforward, the detailed steps are omitted for brevity. 
\end{proof}
The hierarchical control input is defined as:
\begin{equation}
    u(t) = u_l(t) + u_m(t),
\end{equation}
where $u_l$ and $u_m$ are determined by the policy:  
\begin{equation}\label{eq:policy}
 \begin{cases} 
u_l(t) = \pi_l\big(x(t), u_m(t), i\big),~ \dot{u}_m(t) = 0, & t \in \mathcal{T}_i, \\  
u_l^+(t) = u_l(t),~u_m^+(t) = \pi_m\big(x^+(t)\big), & t / T' \in \mathbb{N},
\end{cases}
\end{equation}
where $\mathcal{T}_i = [(i-1)T', iT')$. Here, $\pi_m : \mathbb{R}^{n_x} \to \mathcal{U}_M \subset \mathcal{U}$ is the high-level planner’s policy generating a reference trajectory, and $\pi_l : \mathbb{R}^{n_x} \times \mathbb{R}^{n_u} \times \mathbb{N} \to \mathcal{U}$ is the low-level tracking policy. The constraint set $\mathcal{U}_M \subset \mathcal{U}$ governs the control authority reserved for each policy and is a design parameter.  

\subsubsection{High-level reference planning}\label{sec:midLayer}
\textbf{Reference Model:} In this section, an assumption is made that the reference trajectory $z(t)$ is generated using the following piecewise LTI model:
\begin{equation}\label{eq:referenceModel}
\begin{aligned}
     \begin{cases}
    \begin{matrix*}[l] \dot{z}(t) = A z(t)+ B u_m(t) \end{matrix*}, &  t \in \cup_{i=0}^\infty  (iT', (i+1)T')\\
    \begin{matrix*}[l] z^+(t) = \Delta_{z}(x^-(t)) \end{matrix*}, &  t \in \cup_{i=0}^\infty \{iT'\} \\
    \end{cases},
\end{aligned}
\end{equation}
where $T'$ from~\eqref{eq:lowLevelCnstr} is specified by the user and $z^-(t) = \lim_{\tau \nearrow t}z(\tau)$ and $z^+(t) = \lim_{\tau \searrow t}z(\tau)$ denote the right and left limits of the reference trajectory $z(t) \in \mathbb{R}^n$, which is assumed right continuous. The matrices $(A, B)$ are known and, in practice, may be computed by linearizing the system dynamics~\eqref{x0} about the equilibrium point, i.e., the origin.
Finally, the reference input $u_m(t) \in \mathbb{R}^d$ and the \textit{reset map} $\Delta_{z}$, which depends on the state of the nonlinear system~\eqref{x0}, are given by the higher layer as discussed next. 
 
\noindent\textbf{Model Predictive Control:}
A Model Predictive Controller (MPC) is designed to compute the high-level input $u_m(t)$ that defines the evolution of the reference trajectory in~\eqref{eq:referenceModel}, and to define the reset map $\Delta_{z}$ for the model~\eqref{eq:referenceModel}. The MPC problem is solved at $1/T'$ Hertz and therefore the reference high-level input is piecewise constant, i.e., $\dot u_m(t) =0~\forall t \in \mathcal{T}$ where $\mathcal T = \cup_{i=0}^\infty  (iT', (i+1)T')$. 
Initially, this study introduces the following discrete-time linear model:
\begin{equation}\label{eq:linearDiscreteSystem}
    z^d_{i+1} = \bar A z^d_i + \bar B v_i,
\end{equation}
where the transition matrices are $\bar A = e^{A T'} \text{ and } \bar B = \int_0^{T'} e^{A(T'-\eta)}B d\eta$. Now notice that, as the high-level input  $u_m$ is piecewise constant, if at time $t_i = iT'$ the state $z(iT')=z^+(iT')=z^d_i$ and $u_m(iT')=v_i$, then at time $t_{i+1} = (i+1)T'$. This study shows that \begin{equation}\label{eq:relDisCon}
    z^-((i+1)T')=z^d_{i+1}.
\end{equation}
Given the discrete-time model~\eqref{eq:linearDiscreteSystem} and the state of the nonlinear system~\eqref{x0}, $x(iT')$, the following finite-time optimal control problem is solved at time $t_i = iT' \in \mathcal{T}^c$:
\begin{subequations}\label{eq:ftocp}
\begin{align}
    \min_{\boldsymbol{v}_i, z_{i|i}^d} \quad &\sum_{k = i}^{i+N-1}\big( || z_{k|i}^d||_Q + ||v_{k|i}||_R \big) + ||z_{i+N|i}^d||_{Q_f} \\
    \text{s.t.} ~\quad & z^d_{k+1|i} = \bar A z^d_{k|i} + \bar B v^d_{k|i}\\
    & ||z^d_{k+1|i} - z^d_{k|i}||_2 \leq d-c \label{eq: zi zi+1 close}\\ 
    &  z^d_{k|i} \in \mathcal{X}_T' \ominus \mathcal{C}, ~ v^d_{k|i} \in \mathcal{U}_m \\
    &  z^d_{i|i} - x(iT') \in \mathcal{C} \label{eq: x z close}\\
    &  z^d_{i+N|i} \in \mathcal{X}_F ,\forall k = \{i, \ldots, i+N-1 \}
\end{align}
\end{subequations}
where $||p||_Q = p^\top Qp$ and $\mathcal C = \{x\; |\; \|x\|\leq c\}$ for some $0<c<d$ such that $\mathcal X_T\ominus\mathcal C \neq \emptyset$. Problem~\eqref{eq:ftocp} computes a sequence of open-loop actions $\boldsymbol{v}_i^d=[v^d_{i|i},\ldots,v^d_{i+N|i}]$ and an initial condition $z^d_{i|i}$ such that the predicted trajectory steers the system to the terminal set $\mathcal{X}_F\subset \mathcal X_{T'}$, while minimizing the cost and satisfying state and input constraints.
Let 
\begin{equation}\label{eq:mpcOpt}
        \boldsymbol{v}_i^{d,*}=[v^{d,*}_{i|i},\ldots,v^{d,*}_{i+N|i}], \quad \boldsymbol{z}_i^{d,*}=[z^{d,*}_{i|i},\ldots,z^{d,*}_{i+N|i}]
\end{equation} be the optimal solution of \eqref{eq:ftocp}, then the high-level policy is \begin{equation}\label{eq:midLevPolicy}
\begin{aligned}
    \pi_{m}(x(iT')) = \begin{cases}
    \begin{matrix*}[l] {u_m}(t) = v^{d,*}_{i|i} \end{matrix*} &  t = iT' \in \mathcal{T}^c\\
    \begin{matrix*}[l] \dot u_m(t) \!=\! 0 \end{matrix*} &  t \in \mathcal{T} \\
    \end{cases}
\end{aligned}
\end{equation}
Finally, The reset map is defined for~\eqref{eq:referenceModel} as follows:
\begin{equation}\label{eq:returnMap}
\begin{aligned}
\Delta_{z}(x(iT')) = z_{i|i}^{d,*}.
\end{aligned}
\end{equation}

\subsubsection{Low-level control synthesis}\label{sec: low level u}
The low-level policy is designed $\pi_l$.
Consider the system dynamics \eqref{x0} under the effect of the policy \eqref{eq:policy}:
\begin{align}\label{eq:closed_loop_system}
    \dot x(t) = f\big(x(t)\big) +g\big(x(t)\big)\big(u_l(t)+u_m(t)\big).
\end{align}
The sets are  $\mathcal D_i$ and $\mathcal C_i$  defined as
\begin{align}
   \mathcal D_{i} & \triangleq z^-(iT') \oplus \mathcal D = \{x\; |\; \|x-z^-(iT')\|\leq d\},\\
    \mathcal{C}_i & \triangleq z^-(iT') \oplus \mathcal C = \{x\; |\; \|x-z^-(iT')\|\leq c\}.
\end{align}
The results are presented in Section \ref{sec:properties} that $\mathcal C_i\subset \mathcal D_{i+1}$ (guaranteed by bound on the rate change of the reference trajectory $z(t)$ in \eqref{eq: zi zi+1 close}) along with $\mathcal D_i\subset\mathcal X$ and $\mathcal C_i\subset \mathcal X_{T'}$ (guaranteed by \eqref{eq: x z close}) guarantees that closed-loop trajectories meet the objectives in \eqref{eq:lowLevelCnstr}.
Under these considerations, the low-level control objective for $t\in \mathcal T_i = [(i-1)T', iT')$ is to design the policy $\pi_l$ such that the set $\mathcal D_i$ is DoA for the set $\mathcal{C}_i$.
To this end, for the time interval $\mathcal T_i$ with $i\in \mathbb{R}_{0+}$, consider the candidate barrier function $b_i:\mathbb R^{n_x}\rightarrow\mathbb R$ defined as {
\begin{align}\label{eq: FT barrier h_i}
   \mathfrak b_i(x(t)) = \frac{1}{2}c^2-\frac{1}{2}\|x(t)-z^-(iT')\|^2, \quad t\in \mathcal T_i.
\end{align}}
and define the following QP:{\small
\begin{subequations}\label{QP gen}
\begin{align}
\min_{u_l, \delta } \; &\frac{1}{2}u_l^2 + \frac{1}{2}  \delta ^2 + c\delta \\
    \textnormal{s.t.} \; &  A_u(u_l+u_m)  \leq  \; B_u, \label{C1 cont const}\\
    & L_f \mathfrak b_i(x) + L_g \mathfrak b_i(x)(u_m+u_l)  \geq  -\delta \hspace{1pt} \mathfrak b_i(x) \nonumber \\& \hspace{120pt}+\alpha \max\{0,-\mathfrak b_i(x)\}^{\gamma_1} \nonumber\\
    &\hspace{120pt} +\alpha \max\{0,-\mathfrak b_i(x)\}^{\gamma_2} \label{C2 stab const}
\end{align}
\end{subequations}}\normalsize
where $q>0$, $\delta$ represents the slack variable. and $u_m = \pi_m(x(i-1)T')$. The optimal solution of the QP is denoted \eqref{QP gen} as $(u_l^\star(x,u_m,i), \delta ^\star(x,u_m,i))$ and define the low-level policy as 
\begin{align}\label{eq:low_level_policy}
    \pi_l(x(t),u_m(t),i) = u_l^\star(x(t),u_m,i).
\end{align}
The constraint \eqref{C1 cont const} guarantees that $u = u_l+u_m\in \mathcal U$. The parameters $\mu, \alpha, \gamma_1, \gamma_2$ in \eqref{C2 stab const} are fixed, and are chosen as $\alpha  = \max\left\{\frac{\mu k}{(1-k)T'},\frac{\mu \pi}{T'\sqrt{1-\texttt{r}^2}}\right\}$, $\gamma_1 = 1+\frac{1}{\mu}$ and $\gamma_2 = 1-\frac{1}{\mu}$ with $\mu>1$ and $0<\texttt{r}, k<1$, so that the closed-loop trajectories reach the zero super-level set of the barrier function $b_i$ within the time step $T'$. 

\subsubsection{Closed-loop straint properties}\label{sec:properties}
In this section, the properties of the proposed hierarchical control architecture are shown. Consider the closed-loop system \eqref{eq:closed_loop_system} under the control input \eqref{eq:policy} with policies $\pi_m$ and $\pi_l$ defined in \eqref{eq:midLevPolicy} and \eqref{eq:low_level_policy}, respectively. Below,  three reasonable assumptions for the analysis of the properties and the explanation of how the satisfaction of the closed-loop trajectories leads to the satisfaction of \eqref{eq:lowLevelCnstr} are provided.

\begin{assumption}\label{ass:lowLevel}
For each interval $\mathcal T_i$, the solution $\left(u^\star(x(t),u_m,i), \delta ^\star(x(t),u_m,i)\right)$ of the QP \eqref{QP gen} is continuous for all $t\in \mathcal T_i$ and the following holds
\begin{align}\label{eq:bar r}
    \sup_{t\in \mathcal T_i}\frac{\delta ^\star(x(t),u_m,i)}{2\alpha} \leq \bar r \triangleq \frac{(\frac{d^2-c^2}{2})^\frac{1}{\mu}}{2k} + \frac{k}{2(\frac{d^2-c^2}{2})^\frac{1}{\mu}},
\end{align}
where $\alpha = \max\{\frac{\mu k}{(1-k)T},\frac{\mu \pi}{T}\}$, $\gamma_1 = 1+\frac{1}{\mu}$ and $\gamma_2 = 1-\frac{1}{\mu}$ for some $\mu>1$ and $0<k<1$.
\end{assumption}
\begin{assumption}\label{ass:invariance}
The set $\mathcal{X}_F$ is invariant for the autonomous discrete time model $z^d((i+1)T') =\bar A z^d(iT')$ for all $i\in \mathbb{N}$. Furthermore, for all $i\in \mathbb N$, it holds that $||z^d(iT')-\bar Az^d(iT')|| \leq d - c$.
\end{assumption}
\begin{assumption}\label{assum: QP feas bd}
For all $x\in \partial \mathcal{C}_i$, $i\in \mathbb Z_+$, and $u_m\in \mathcal U_M$, there exists $u_l\in \mathcal U_l$ such that the following holds:
\begin{align*}
    L_fb_i(x) + L_gb_i(x)(u_m+u_l) \geq 0.
\end{align*}
\end{assumption}
\begin{remark}
For given input bounds (dictated by the set $\mathcal U)$, the value of the slack term $\delta$ in QP \eqref{QP gen} depends on the time of convergence $T$. Furthermore, the upper-bound in \eqref{eq:bar r} depends on the parameters $c$ and $k$, where $0<c<d$ is such that $\mathcal X_T\ominus\mathcal C$ is non-empty and $0<k<1$. Thus, in practice, numerical simulations can guide the choice of the parameters $c,k$, and the time $T'$, so that \eqref{eq:bar r} in Assumption \ref{ass:lowLevel} can be satisfied. Assumption \ref{ass:invariance}  is standard in the MPC literature~\cite{kouvaritakis2016model} and it allows us to guarantee that the MPC problem is feasible at all time instances. In practice, the set $\mathcal{X}_F$ can be chosen as a small neighborhood of the origin.
\end{remark}

It is shown that under the low-level controller, defined as the optimal solution of the QP in \eqref{QP gen}, the set $\mathcal D_i$ is an DoA for the set $\mathcal{C}_i$. To this end, the QP \eqref{QP gen} must be feasible for all $x$ so that the low-level controller is well-defined. The slack term $\delta$ ensures the feasibility of the QP in \eqref{QP gen} for all $x\notin \partial \mathcal{C}_{i}$ ($\partial \mathcal{C}_{i}$ is the boundary of $\mathcal{C}_{i}$). For the feasibility of the QP in \eqref{QP gen} for $x\in \partial \mathcal{C}_{i}$, the following assumption is made.

From Definition  \ref{def:FT-barrier}, it is known that DoA depends on the ratio $r=\frac{\delta}{2\alpha}$ and we have the following results. 

\begin{lemma}\label{prop:maximum value}
Consider the maximum value of $\delta^\star(x)$ as the solution of the QP \eqref{QP gen}, then $b_i$ is a finite-time barrier function for the set $\mathcal{C}_i$, and $\mathcal{D}_i$ serves as its DoA.
\end{lemma}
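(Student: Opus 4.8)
The plan is to read off from the low-level QP \eqref{QP gen} that the candidate \eqref{eq: FT barrier h_i} meets Definition \ref{def:FT-barrier} verbatim, and then to invoke the characterization lemma that follows Definition \ref{def:FT-barrier} and match the resulting domain with $\mathcal D_i$. First I would record well-posedness: the slack variable $\delta$ renders \eqref{QP gen} feasible for every $x\notin\partial\mathcal C_i$, Assumption \ref{assum: QP feas bd} supplies a feasible $u_l$ when $x\in\partial\mathcal C_i$, and Assumption \ref{ass:lowLevel} makes the optimizer $(u_l^\star,\delta^\star)$ continuous on each $\mathcal T_i$, so the closed loop \eqref{eq:closed_loop_system} under \eqref{eq:low_level_policy} is well defined and its trajectories exist.

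Next, along any closed-loop solution $\dot{\mathfrak b}_i(x)=L_f\mathfrak b_i(x)+L_g\mathfrak b_i(x)(u_m+u_l^\star)$, so constraint \eqref{C2 stab const} is exactly inequality \eqref{eq:FTBarrier} of Definition \ref{def:FT-barrier}, with the prescribed exponents $\gamma_1=1+\tfrac1\mu$, $\gamma_2=1-\tfrac1\mu$, the prescribed $\alpha$, and $\delta=\delta^\star(x(t))$; hence $\mathfrak b_i$ is a finite-time barrier function for $\mathcal C_i=\{x:\mathfrak b_i(x)\ge0\}=z^-(iT')\oplus\mathcal C$. The ratio $r=\delta^\star(x(t))/(2\alpha)$ varies along the trajectory but is bounded above by $\bar r$ from \eqref{eq:bar r} thanks to Assumption \ref{ass:lowLevel}; since the domain $D_{\mathcal S}$ supplied by the characterization lemma is all of $\mathbb R^{n_x}$ for $r<1$ and is nonincreasing in $r$ on $[1,\infty)$ (because $r\mapsto r-\sqrt{r^2-1}$ is decreasing there), the conservative choice $r=\bar r$ yields the smallest such domain, which is contained in the per-instant domain at every $t$; it therefore suffices to work with $r=\bar r$. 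Writing $s:=(\tfrac{d^2-c^2}{2})^{1/\mu}$ we have $\bar r=\tfrac12(s/k+k/s)\ge1$ by AM--GM, so we land in the branch $r\ge1$ of the lemma.

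In that branch $D_{\mathcal S}=\{x:\mathfrak b_i(x)\ge -k^\mu(\bar r-\sqrt{\bar r^2-1})^\mu\}$ and $T_{\mathcal S}=\mu k/(\alpha(1-k))$. Substituting $\mathfrak b_i(x)=\tfrac12c^2-\tfrac12\|x-z^-(iT')\|^2$ turns $D_{\mathcal S}$ into the ball $\{x:\|x-z^-(iT')\|^2\le c^2+2k^\mu(\bar r-\sqrt{\bar r^2-1})^\mu\}$, and a short computation gives $\bar r^2-1=\tfrac14(s/k-k/s)^2$, hence $\bar r-\sqrt{\bar r^2-1}=s/k$ under the standing parameter selection ($s\le k$), so that $c^2+2k^\mu(s/k)^\mu=c^2+2s^\mu=d^2$ and $D_{\mathcal S}=\{x:\|x-z^-(iT')\|\le d\}=\mathcal D_i$. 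Finally $T_{\mathcal S}=\mu k/(\alpha(1-k))\le T'$ because the prescribed $\alpha$ satisfies $\alpha\ge\mu k/((1-k)T')$; combining this with part (i) of the characterization lemma (forward invariance of $D_{\mathcal S}=\mathcal D_i$) gives precisely the two requirements in Definition \ref{def:FT-DoA}, so $\mathcal D_i$ is a DoA of $\mathcal C_i$ with convergence time $T_{\mathcal S}\le T'$, which proves the claim.

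I expect the main obstacle to be the two reconciliations in the middle: (a) reducing the state-dependent slack $\delta^\star(x)$ to a single ratio $\bar r$ in a way that is genuinely conservative for the domain of attraction --- this is exactly where Assumption \ref{ass:lowLevel} and the monotonicity of $D_{\mathcal S}$ in $r$ carry the weight, and where one must be careful about the sign of $\mathfrak b_i$ on $\mathcal D_i\setminus\mathcal C_i$ and about whether $\delta^\star$ stays nonnegative there; and (b) verifying the exact algebraic identity $c^2+2k^\mu(\bar r-\sqrt{\bar r^2-1})^\mu=d^2$, which is where the quadratic form of \eqref{eq: FT barrier h_i}, the definition of $\bar r$ in \eqref{eq:bar r}, and the prescribed $\alpha$ must all fit together; the branch analysis $s\le k$ versus $s>k$ in the radical is the one spot that is not purely formal.
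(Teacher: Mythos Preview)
Your argument is correct and is considerably more explicit than the paper's own proof. The paper does not carry out the identification $D_{\mathcal S}=\mathcal D_i$ at all; instead it introduces auxiliary parameters $\alpha_1,\alpha_2$ (with $\alpha=\sqrt{\alpha_1\alpha_2}$), treats only the case $r<1$ by an AM--GM lower bound on $\alpha_1 b^{1/\mu}+\alpha_2 b^{-1/\mu}$ to conclude $D_i=\mathbb R^n$, and then defers the remaining cases wholesale to Lemma~3 of \cite{garg2021characterization}. Your route is genuinely different: you bound the state-dependent slack by $\bar r$ via Assumption~\ref{ass:lowLevel}, observe by AM--GM that $\bar r\ge1$ so that the $r\ge1$ branch of the characterization lemma applies, and then do the algebra $c^2+2k^\mu(\bar r-\sqrt{\bar r^2-1})^\mu=d^2$ that pins the sub-level set of $\mathfrak b_i$ exactly to the ball $\mathcal D_i$, together with $T_{\mathcal S}\le T'$ from the prescribed $\alpha$. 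What your approach buys is a self-contained proof that actually exhibits $\mathcal D_i$ as the DoA without sending the reader to the external reference; what the paper's approach buys is brevity, at the cost of leaving that identification entirely implicit.

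The parametric branch you flag ($s\le k$ versus $s>k$ when resolving $\sqrt{\bar r^2-1}=\tfrac12|s/k-k/s|$) is a real restriction and is not addressed by the paper either; in the complementary branch the sub-level set produced by the characterization lemma is strictly smaller than $\mathcal D_i$, so your equality $D_{\mathcal S}=\mathcal D_i$ genuinely needs $s\le k$ as a standing hypothesis. You are right to single this out as the one step that is not purely formal.
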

\begin{proof}
    Consider the fixed-time barrier function of Definition \ref{def:FT-barrier}, and define the maximum value of \( b^\star := \left(\frac{\alpha_{2}}{\alpha_{1}}\right)^{\frac{\mu}{2}} \) and the maximum value of \( \delta^{\star} := \alpha_{1}\left(b^{\star}\right)^{\frac{1}{\mu}} + \alpha_{2}\left(b^{\star}\right)^{-\frac{1}{\mu}} = 2\sqrt{\alpha_{1}\alpha_{2}} \) so that \( \alpha_{1} b(x)^{\frac{1}{\mu}} + \alpha_{2} b(x)^{-\frac{1}{\mu}} \geq \delta^{\star} \) for all \( x \in \mathbb{R}^{n} \), where $\alpha=\sqrt{\alpha_{1}\alpha_{2}}>0$. Thus, for \( r < 1 \), it holds that \( \delta_{1} < \delta^{\star} \leq \alpha_{1} b(x)^{\frac{1}{\mu}} + \alpha_{2} b(x)^{-\frac{1}{\mu}} \) for all \( x \), and so, \( \dot{b}(x) < 0 \) for all \( x \in \mathbb{R}^{n} \backslash \{0\} \). Since \( D_i \) is defined as the largest sub-level set of \( b \) such that \( \dot{b}(x) \) takes negative values for \( x \in D_i \backslash \{0\} \), it holds that in the case when \( 0 \leq r < 1 \), \( D_i = \mathbb{R}^{n} \), the origin is DoA, and other cases can refer to the Lemma 3 in  \cite{garg2021characterization}.
\end{proof}

\begin{lemma}\label{prop:mpcFeasibilityIMplications}
If the MPC problem~\eqref{eq:ftocp} is feasible at time $t_i=iT'$, then $x(iT') \in \mathcal D_{i+1}$.
\end{lemma}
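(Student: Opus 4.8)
The plan is to read off exactly what feasibility of the MPC problem~\eqref{eq:ftocp} provides, to identify the left limit $z^-((i+1)T')$ of the reference trajectory with the one-step optimal prediction $z^{d,*}_{i+1|i}$, and then to close the argument with a single triangle inequality. No recursion, invariance, or convergence property of the low-level controller should be needed for this particular lemma; it is purely a consequence of two of the MPC constraints together with the reset map.

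First I would invoke feasibility directly: at $t_i = iT'$ there exists a (feasible, in particular optimal) solution $\boldsymbol{v}_i^{d,*},\boldsymbol{z}_i^{d,*}$ as in~\eqref{eq:mpcOpt} satisfying every constraint of~\eqref{eq:ftocp}. Two of them are the ones I need. Constraint~\eqref{eq: x z close} gives $z^{d,*}_{i|i} - x(iT') \in \mathcal{C}$, i.e. $\|x(iT') - z^{d,*}_{i|i}\| \le c$; and constraint~\eqref{eq: zi zi+1 close} evaluated at $k=i$ gives $\|z^{d,*}_{i+1|i} - z^{d,*}_{i|i}\| \le d-c$.

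Second, I would tie the discrete prediction back to the continuous reference. By the reset map~\eqref{eq:returnMap}, $z^+(iT') = \Delta_z(x(iT')) = z^{d,*}_{i|i}$, and by the high-level policy~\eqref{eq:midLevPolicy}, $u_m \equiv v^{d,*}_{i|i}$ on $(iT',(i+1)T')$ (piecewise constant). Hence the premise of~\eqref{eq:relDisCon} is met with $z^d_i = z^{d,*}_{i|i}$ and $v_i = v^{d,*}_{i|i}$, so that $z^-((i+1)T') = z^d_{i+1} = \bar A z^{d,*}_{i|i} + \bar B v^{d,*}_{i|i} = z^{d,*}_{i+1|i}$, the last equality being the discrete dynamics~\eqref{eq:linearDiscreteSystem} along the optimal prediction. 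Combining the two bounds by the triangle inequality,
\begin{align*}
\|x(iT') - z^-((i+1)T')\| &= \|x(iT') - z^{d,*}_{i+1|i}\| \\
&\le \|x(iT') - z^{d,*}_{i|i}\| + \|z^{d,*}_{i|i} - z^{d,*}_{i+1|i}\| \le c + (d-c) = d,
\end{align*}
which, by the definition $\mathcal{D}_{i+1} = z^-((i+1)T') \oplus \mathcal{D}$ with $\mathcal{D} = \{x : \|x\| \le d\}$, is exactly the claim $x(iT') \in \mathcal{D}_{i+1}$.

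The only delicate point I anticipate is the bookkeeping identity $z^-((i+1)T') = z^{d,*}_{i+1|i}$: it rests on $u_m$ being piecewise constant over $(iT',(i+1)T')$ and on the transition matrices being exactly $\bar A = e^{AT'}$ and $\bar B = \int_0^{T'} e^{A(T'-\eta)}B\,d\eta$, which is precisely the content of~\eqref{eq:relDisCon}. Once that identification is in place, the remainder is a direct substitution of the feasibility constraints~\eqref{eq: x z close}--\eqref{eq: zi zi+1 close} followed by the triangle inequality.
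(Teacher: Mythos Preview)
Your proposal is correct and follows essentially the same route as the paper's own proof: feasibility yields the two bounds from constraints~\eqref{eq: x z close} and~\eqref{eq: zi zi+1 close}, the identification $z^-((i+1)T') = z^{d,*}_{i+1|i}$ via~\eqref{eq:relDisCon}, and a single triangle inequality. If anything, your justification of the identity $z^-((i+1)T') = z^{d,*}_{i+1|i}$ through the reset map and the piecewise-constant $u_m$ is more explicit than the paper's, which simply invokes~\eqref{eq:relDisCon} directly.
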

\begin{proof}
By assumption of the lemma, the MPC problem \eqref{eq:ftocp} is feasible at time $t_i = iT'$. Now consider the optimal MPC solution~\eqref{eq:mpcOpt} at time $t_i = iT'$. By definition, It is observed that $x(iT') - z_{i|i}^{*,d} \in \mathcal{C}$, which implies that $||x(iT') - z_{i|i}^{*,d}|| \leq c$. Furthermore, by feasibility of the optimal MPC solution~\eqref{eq:mpcOpt} for problem~\eqref{eq:ftocp}, It is observed that 
$||z_{i+1|i}^{*,d}-z_{i|i}^{*,d}|| \leq d-c$. This implies that 
\begin{equation*}
\begin{aligned}
    ||x(iT') - z_{i+1|i}^{*,d}|| &= ||x(iT') -z_{i|i}^{*,d}+z_{i|i}^{*,d}- z_{i+1|i}^{*,d}|| \\
    &\leq ||x(iT') -z_{i|i}^{*,d}||+||z_{i|i}^{*,d}- z_{i+1|i}^{*,d}||\leq d
\end{aligned}
\end{equation*}
Finally, from~\eqref{eq:relDisCon} It is observed that $z_{i+1|i}^{*,d} = z^-((i+1)T')$. Thus, from the above equation,the results indicate that $||x(iT') - z^-((i+1)T')|| \leq d$, which implies that $x(iT') \in \mathcal D_{i+1}$.
\end{proof}

\subsubsection{MPC recursive feasibility and closed-Loop constraint satisfaction}
So far, It is observed that feasibility of the MPC guarantees that $x(iT')\in \mathcal D_{i+1}$, which, under the low-level control policy \eqref{eq:low_level_policy}, guarantees that $x((i+1)T')\in \mathcal C_{i+1}$. Thus, what is remaining to be shown is that the MPC \eqref{eq:ftocp} is recursively feasible, i.e., if \eqref{eq:ftocp} is feasible at $t = 0T$, then it is feasible at $t = iT'$ for all $i\in \mathbb N$. This would guarantee that $x(iT')\in \mathcal C_i$ (and hence, $x(iT')\in \mathcal X_{T'}$) for all $i\in \mathbb N$.

Now, the main result is ready to be stated, showing that the hierarchical control strategy in Section 3.4 leads to the satisfaction of \eqref{eq:lowLevelCnstr}.

\begin{theorem}\label{th:main_result}
\textbf{(Fault-tolerant control recursive feasibility):} Let Assumptions ~\ref{ass:invariance} -\ref{assum: QP feas bd} hold and consider the closed-loop system~\eqref{eq:closed_loop_system} under the control policy~\eqref{eq:policy}, where $\pi_m$ is defined in \eqref{eq:midLevPolicy} and $\pi_l$ is defined in \eqref{eq:low_level_policy}. If at time $t=0$ problem~\eqref{eq:ftocp} is feasible, then 
the closed-loop trajectories under the control policy \eqref{eq:policy} satisfy \eqref{eq:lowLevelCnstr}, i.e., the set $\mathcal X_{T'}$ is periodically safe w.r.t. the set $\mathcal X$ with period $T'$.
\end{theorem}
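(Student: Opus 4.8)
The plan is to prove the statement in two stages. First I would establish that the high-level MPC problem~\eqref{eq:ftocp} is \emph{recursively feasible} along the closed loop: if it is feasible at $t=0$, it is feasible at every $t_i=iT'$, $i\in\mathbb N$. Second, I would show that recursive feasibility, combined with the fixed-time convergence guarantee of the low-level controller and the set inclusions recorded in Section~\ref{sec: low level u}, yields the two conditions in~\eqref{eq:lowLevelCnstr}, namely $x(iT')\in\mathcal X_{T'}$ for all $i$ and $x(t)\in\mathcal X$ for all $t\ge 0$.

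For recursive feasibility I would argue by induction on $i$. The base case is the hypothesis. For the step, assume~\eqref{eq:ftocp} is feasible at $t_i=iT'$ with optimal solution $(\boldsymbol v_i^{d,*},\boldsymbol z_i^{d,*})$. By Lemma~\ref{prop:mpcFeasibilityIMplications}, $x(iT')\in\mathcal D_{i+1}$; then, since by Lemma~\ref{prop:maximum value} the set $\mathcal D_{i+1}$ is the DoA of $\mathcal C_{i+1}$ for the low-level closed loop, and since — by the parameter choices of Section~\ref{sec: low level u} together with Assumption~\ref{ass:lowLevel} (and well-posedness of the QP on $\partial\mathcal C_{i+1}$ via Assumption~\ref{assum: QP feas bd}) — the fixed-time convergence occurs within one period $T'$, it follows that $x((i+1)T')\in\mathcal C_{i+1}$, i.e.\ $\|x((i+1)T')-z^-((i+1)T')\|\le c$. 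Using $z^-((i+1)T')=z^{d,*}_{i+1|i}$ from~\eqref{eq:relDisCon}, I would then build a feasible candidate for~\eqref{eq:ftocp} at $t_{i+1}=(i+1)T'$ by the standard MPC shift: take $z^d_{i+1|i+1}:=z^{d,*}_{i+1|i}$ (admissible for~\eqref{eq: x z close} precisely because $x((i+1)T')\in\mathcal C_{i+1}$), keep $z^d_{k|i+1}:=z^{d,*}_{k|i}$, $v^d_{k|i+1}:=v^{d,*}_{k|i}$ for the overlapping indices, and append $v^d_{i+N|i+1}:=0$ so that the last predicted state is $\bar A z^{d,*}_{i+N|i}$. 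Assumption~\ref{ass:invariance} is exactly what closes this off: invariance of $\mathcal X_F$ under $z^d\mapsto\bar A z^d$ gives the appended terminal inclusion (and, with $\mathcal X_F\subset\mathcal X_{T'}$, the state constraint), while the bound $\|z^d(iT')-\bar A z^d(iT')\|\le d-c$ gives the rate constraint~\eqref{eq: zi zi+1 close} on the appended step; all remaining constraints are inherited from $\boldsymbol z_i^{d,*}$. This exhibits a feasible point, completing the induction.

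For the second stage, recursive feasibility implies that~\eqref{eq:ftocp} is feasible at every $t_i=iT'$. The constraints~\eqref{eq: x z close} and $z^d_{i|i}\in\mathcal X_{T'}\ominus\mathcal C$ then force $x(iT')\in\mathcal X_{T'}$ for every $i\in\mathbb N$, in particular for $i=0$. For the continuous-time inclusion, on each interval $\mathcal T_i=[(i-1)T',iT')$ feasibility at $(i-1)T'$ gives $x((i-1)T')\in\mathcal D_i$ (Lemma~\ref{prop:mpcFeasibilityIMplications}), so by the forward-invariance part of Lemma~\ref{prop:maximum value} the low-level closed loop keeps $x(t)\in\mathcal D_i$ for $t\in\mathcal T_i$, and $\mathcal D_i\subset\mathcal X$ (guaranteed through~\eqref{eq: x z close} and~\eqref{eq: zi zi+1 close}) gives $x(t)\in\mathcal X$; as $\bigcup_{i\ge1}\mathcal T_i=[0,\infty)$ this covers all $t\ge0$. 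Moreover the convergence part of Lemma~\ref{prop:maximum value} on $\mathcal T_i$ yields $x(iT')\in\mathcal C_i\subset\mathcal X_{T'}$. Together these are precisely~\eqref{eq:lowLevelCnstr}, so $\mathcal X_{T'}$ is periodically safe w.r.t.\ $\mathcal X$ with period $T'$.

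The step I expect to be the main obstacle is the construction and verification of the shifted candidate at $t_{i+1}$: it is not enough that the previous prediction $\boldsymbol z_i^{d,*}$ is admissible, because the initial predicted state $z^d_{i+1|i+1}$ is itself a decision variable tied to the \emph{actual} nonlinear state $x((i+1)T')$ through~\eqref{eq: x z close}. Bridging this gap requires the low-level guarantee $x((i+1)T')\in\mathcal C_{i+1}$, which in turn rests on showing that the fixed-time barrier inequality~\eqref{C2 stab const} with the parameters fixed in Section~\ref{sec: low level u} drives $\mathfrak b_i$ to its zero super-level set strictly within one period; checking the quantitative time bound $T_{\mathcal C_i}\le T'$ from Lemma~\ref{prop:maximum value} under Assumption~\ref{ass:lowLevel} is the delicate part, after which the induction and the set-inclusion bookkeeping are routine.
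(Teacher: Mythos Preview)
Your proposal is correct and follows essentially the same approach as the paper: an induction for recursive feasibility that uses Lemma~\ref{prop:mpcFeasibilityIMplications} together with the low-level fixed-time convergence to obtain $x((i+1)T')\in\mathcal C_{i+1}$, then the standard shifted candidate $[z^{d,*}_{i+1|i},\ldots,z^{d,*}_{i+N|i},\bar A z^{d,*}_{i+N|i}]$ with appended zero input closed off by Assumption~\ref{ass:invariance}, followed by the set-inclusion bookkeeping ($x(iT')\in\mathcal X_{T'}$ from~\eqref{eq: x z close} and $z^d_{i|i}\in\mathcal X_{T'}\ominus\mathcal C$; $x(t)\in\mathcal D_i\subset\mathcal X$ from forward invariance) to conclude~\eqref{eq:lowLevelCnstr}. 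Your identification of the crucial bridge step --- that~\eqref{eq: x z close} at $t_{i+1}$ with $z^d_{i+1|i+1}=z^{d,*}_{i+1|i}$ is precisely the low-level guarantee $x((i+1)T')\in\mathcal C_{i+1}$ --- is exactly the point the paper hinges on.
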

\begin{proof}
The proof proceeds by induction. First, it is shown that if at time $t_i = iT'$ the MPC problem~\eqref{eq:ftocp} is feasible, then at time $t_{i+1}=(i+1)T'$ the MPC problem~\eqref{eq:ftocp} is feasible. Let 
\begin{equation*}
    [z_{i|i}^{d,*}, z_{i+1|i}^{d,*},\ldots,z_{i+N|i}^{d,*}] \text{ and } [u_{i|i}^{d,*},\ldots,u_{i+N-1|i}^{d,*}]
\end{equation*}
be the optimal state input sequence to the MPC problem~\eqref{eq:ftocp} at time $t_i = iT'$. Then from the feasibility of the MPC problem and Proposition~\ref{prop:mpcFeasibilityIMplications} it is shown that $x(iT') \in \mathcal D_{i+1}$, 
\begin{equation*}
    x((i+1)T') \in \mathcal{C}_{i+1} =\{x ~| ~ || x-z^-((i+1)T')||\leq c\}.
\end{equation*}
Now notice that from equation~\eqref{eq:relDisCon}, it has been found that $z_{i+1|i}^{*,d} = z^-((i+1)T')$, which in turn implies that 
\begin{equation}\label{eq:discContConn}
    x((i+1)T')- z_{i+1|i}^{d,*} = x((i+1)T')- z^-((i+1)T') \in \mathcal{C}
\end{equation}
and therefore, by Assumption~\ref{ass:invariance}, the following sequences of states and inputs
\begin{equation}\label{eq:feasTr}
   \hspace{-5pt} [z_{i+1|i}^{d,*},\ldots,z_{i+N|i}^{d,*}, \bar A z_{i+N|i}^{d,*}],\;  [u_{i+1|i}^{d,*},\ldots,u_{i+N-1|i}^{d,*}, 0]
\end{equation}
are feasible at time $t_{i+1}=(i+1)T'$ for the MPC problem~\eqref{eq:ftocp}. It has been shown that if the MPC problem~\eqref{eq:ftocp} is feasible at time $t_i = iT'$, then the MPC problem is feasible at time $t_{i+1}=(i+1)T'$. Per assumption of the theorem, problem~\eqref{eq:ftocp} is feasible at time $t_0 =0$, and hence, it is concluded that the MPC problem~\eqref{eq:ftocp} is feasible for all $t_i = iT'$ and for all $i \in \mathbb{R}_{0+}$.

Next, it is shown that the feasibility of the MPC problem implies that state and input constraints are satisfied for the closed-loop system. Notice that by definition $u_m(t) = v_{i|i}^* \in \mathcal{U}_m $ for all $t \in [iT', (i+1)T')$, the low-level controller returns a feasible control action $u_l(t)$, thus it follows that
\begin{equation}
    u(t) = u_l(t) + u_m(t) \in \mathcal{U}, \forall t \in \mathbb{R}_{0+}.
\end{equation}
Finally, from the feasibility of the state-input sequences in~\eqref{eq:feasTr} for the MPC problem~\eqref{eq:ftocp},It is observed that
\begin{equation}\label{eq:feasSolCnstr}
    \begin{aligned}
    & x_{i|i}^{d,*} \in \mathcal{X}_{T'} \ominus \mathcal{C} \text{ and } x(iT') - x_{i|i}^{d,*} \in \mathcal{C},~\forall i\in\mathbb{R}_{0+}.
    \end{aligned}
\end{equation}
From the above equation, it is concluded that $x(iT') \in \mathcal{X}_{T'}$ for all $i\in\mathbb{R}_{0+}$. Note that since $z^-(iT')\in \mathcal X_{T'}\ominus\mathcal C$, $\mathcal C\subset \mathcal D$, and $\mathcal X_{T'} = \mathcal X\ominus\mathcal D$, it follows that $\mathcal D_i = \{z^-(iT')\}\oplus \mathcal D\subset \mathcal X$ for all $i\in \mathbb N$. From \cite[Theorem 1]{garg2021characterization}, the set $\mathcal D_i$ is forward-invariant for the closed-loop trajectories $x(t)$, i.e., $x(t)\in \mathcal D_i$ for $t\in [(i-1)T', iT')$ for all $i\in \mathbb N$. Hence, it follows that $x(t)\in \mathcal X$ for all $t\geq 0$. Thus, the closed-loop trajectories under the control policy \eqref{eq:policy} satisfy \eqref{eq:lowLevelCnstr}, i.e., the set $\mathcal X_{T'}$ is periodically safe w.r.t. the set $\mathcal X$ with period $T'$.
\end{proof}

\begin{theorem}
Consider a dynamical system (1) and a nested STL specification $\varphi$. According to section 3.2, if $\mathcal{M}(\mathbf{x}_{0:t})=1$, the control strategy is governed by the CBF coding protocol. If the initial condition $x_0 \in {X}_{\text{root}}$ and the condition \eqref{QP gen} is feasible, then the resulting system trajectory $(x_{t_k}, 0) \models \varphi$.
\end{theorem}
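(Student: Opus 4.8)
The plan is to show that the closed-loop trajectory $\bm{x}$ produced by the CBF-based controller satisfies some complete path of the STLT $\mathcal{T}_\varphi$ (selecting one disjunct at each $\vee$-node and meeting all branches at each $\wedge$-node), and then to invoke the equivalence between STLT satisfaction and STL satisfaction from Section 2.4 to conclude $(\bm x,0)\models\varphi$.

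First I would exploit $x_0\in\mathbb{X}_{\text{root}}$. By the bottom-up construction of the node sets of $\mathcal T_\varphi$ via the maximal/minimal reachable-set operators (the reachability characterisation of Section 2.3, applied as in the Example of Section 2.4), membership $x_0\in\mathbb{X}_{\text{root}}$ is equivalent to $\varphi$ being satisfiable from $x_0$ in the sense of Definition~\ref{Def:satisfibility}, and it certifies a complete path $\bm p=\mathbb X_0\Theta_1\mathbb X_1\cdots\Theta_{N_f}\mathbb X_{N_f}$ of $\mathcal T_\varphi$ together with a time-coding $\{[\underline t_{k_i},\bar t_{k_i}]\}$ as in Definition~\ref{Def:PathSaf} for which the set-invariance requirements $\bm x(t)\in\mathbb X_i$ over $[\underline t_{k_i},\bar t_{k_i}]$ are realizable. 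The controller's task is to realize such a path online, the event-triggered rule \eqref{trigger_condition} resolving the start instants of those temporal fragments whose start-time intervals are non-singleton.

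Second I would translate the membership requirements along the chosen path into the CBF conditions of Section 3.3 and then use feasibility of the QP \eqref{QP gen}. For each temporal fragment $f_i=\Theta_{f_i}\mathbb X_{f_i}$, the CBF $\mathfrak b_i$ is built so that its zero-superlevel set $\mathcal C(t_k)$ under-approximates $\mathbb X_{f_i}$ on $[\bar t_s(\mathbb X_{f_i}),t_e(\mathbb X_{f_i})]$ (condition~2). Feasibility of \eqref{QP gen}, rendered persistent by Theorem~\ref{th:main_result} (fault-tolerant control recursive feasibility), means $\pi_l$ enforces the barrier inequality \eqref{C2 stab const}; by Lemma~\ref{prop:maximum value} and Definition~\ref{def:FT-barrier} this drives $\mathfrak b_i$ into its zero-superlevel set within the allotted interval and keeps it there (fixed-time convergence plus forward invariance of the DoA), so $\bm x(t)\in\mathbb X_{f_i}$ on the required interval. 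The time-domain nesting $\underline t_{\mathfrak b_j}\le\underline t_{\mathfrak b_i}\le\bar t_{\mathfrak b_j}\le\bar t_{\mathfrak b_i}$ of Lemma~\ref{lem:time_sequence}, together with condition~3 (which forces $\mathfrak b_i(x,\underline t_{\mathfrak b_i})\ge 0$ whenever the predecessor's $\mathfrak b_j(x,\underline t_{\mathfrak b_i})\ge 0$), closes the chain at every handoff: a fragment's safe tube is already entered by the time its activation window opens, leaving no uncovered interval along the path. Performing this for all conjuncts simultaneously and for one disjunct at each $\vee$-node gives $(\bm x,0)\cong\bm p$, hence $\bm x\vDash\mathcal T_\varphi$, hence $(\bm x,0)\models\varphi$. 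The monitor clause only fixes when the CBF controller is engaged; since $x_0\in\mathbb X_{\text{root}}$ guarantees a realizable path from the outset, engaging this controller preserves it.

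I expect the main obstacle to be the handoff argument between temporal fragments: rigorously showing that forward invariance of a predecessor's safe tube, condition~3, and the time-domain nesting of Lemma~\ref{lem:time_sequence} jointly guarantee the successor CBF is nonnegative at and after its activation instant with no gap, especially since the event-triggered update \eqref{trigger_condition} resolves that instant only online, so the chain must be shown to close for whichever instant gets selected. A secondary difficulty is the $\wedge$/$\vee$ bookkeeping: verifying that the collection of complete paths the controller actually satisfies meets the STLT-satisfaction criterion of Section 2.4, and that the QP \eqref{QP gen}, which carries a single active CBF, can be posed so as to respect all currently-active fragments along parallel branches at once.
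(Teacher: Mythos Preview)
Your proposal is correct and follows essentially the same route as the paper: an induction along the chain of temporal fragments of a complete path, using condition~3) at each handoff to transfer nonnegativity from $\mathfrak b_j$ to $\mathfrak b_i$ at $\underline t_{\mathfrak b_i}$, CBF forward invariance to propagate $\mathfrak b_i\ge 0$ over $[\underline t_{\mathfrak b_i},\bar t_{\mathfrak b_i}]$, condition~2) to convert this into $\bm x(t)\in\mathbb X_{f_i}$ on $[\bar t_s(\mathbb X_{f_i}),t_e(\mathbb X_{f_i})]$, and finally verification that these intervals constitute a valid time-coding in the sense of Definition~\ref{Def:PathSaf}. The paper's proof is leaner than yours: it does not invoke Theorem~\ref{th:main_result} or Lemma~\ref{prop:maximum value} (those belong to the MPC tracking layer of Section~3.4, with the specific quadratic barrier \eqref{eq: FT barrier h_i}, not to the STLT CBFs $\mathfrak b_i$), and it needs no fixed-time \emph{convergence} step because condition~3) already guarantees $\mathfrak b_i\ge 0$ at the successor's activation instant, so pure forward invariance suffices.
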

\begin{proof} Given a complete path $\bm p$ and an initial condition $x_0$, let $f_0, f_1, ..., f_N$ be the sequence of temporal fragments contained in $\bm p$ and $\mathfrak{b}_0, \mathfrak{b}_1, \ldots, \mathfrak{b}_N$ the corresponding CBFs. Assume that each $\mathfrak{b}_i, i\in 0, \ldots, N$ satisfies the conditions 1)-3). Furthermore, if $\mathfrak{b}_0(x_0,0)\ge 0$ and each of the CBFs $\mathfrak{b}_i$ satisfies the condition (35) during the corresponding time domain, then the resulting trajectory satisfies this complete path $\bm p$.

 Without loss of generality, assume that $f_i$ is the predecessor of $f_{i+1}$, $i=0,1,...,N-1$. For the top-level temporal fragment $f_0$, since $\mathfrak{b}_0(x_0,0)\ge 0$ and the CBF condition (19) holds in $[0,\bar{t}_{\mathfrak{b}_0}]$, it has been established that $ \mathfrak{b}_0(\bm{x_{t_k}},t_k) \ge 0, \forall t_k\in [0,\bar{t}_{\mathfrak{{b}}_0}]$. Now assume $\mathfrak{b}_i(\bm{x_{t_k}},t_k) \ge 0,  \forall t_k\in [\underline{t}_{\mathfrak{b}_i},\bar{t}_{\mathfrak{b}_i}]$. From condition 3), $\mathfrak{b}_{i+1}(\bm{x}(\underline{t}_{\mathfrak{b}_{i+1}}),\underline{t}_{\mathfrak{b}_{i+1}}) \ge  0$. In addition, the CBF of $\mathfrak{b}_{i+1}$ is satisfied for  $\forall t_k\in [\underline{t}_{\mathfrak{b}_{i+1}},\bar{t}_{\mathfrak{{b}_{i+1}}}]$, and then   $ \mathfrak{b}_{i+1}(\bm{x}(t),t) \ge 0, \forall t_k\in [\underline{t}_{\mathfrak{b}_{i+1}},\bar{t}_{\mathfrak{{b}_{i+1}}}]$. Inductively, the results are obtained $\mathfrak{b}_{i}(\bm{x}(t),t) \ge 0, \forall t_k \in [\underline{t}_{\mathfrak{b}_{i}},\bar{t}_{\mathfrak{{b}_{i}}}]$ for $i = 0,1,2, ..., N$.

In addition, $ \mathfrak{b}_{i}(\bm{x_{t_k}},t_k) \ge 0, \forall t_k \in [\underline{t}_{\mathfrak{b}_{i}},\bar{t}_{\mathfrak{{b}_{i}}}]$ implies that $\bm{x}(t)\in \mathbb{X}_i, \forall t_k\in [\bar{t}_s(\mathbb{X}_{f_i}), {t}_e(\mathbb{X}_{f_i})]$ from condition 2). One verifies that $[\bar{t}_s(\mathbb{X}_{f_i}), {t}_e(\mathbb{X}_{f_i})], \forall f_i$  is a valid time interval coding of the complete path from Definition \ref{Def:PathSaf} items i-iii). Thus, the resulting trajectory satisfies the complete path $\bm p$. The theorem is proved.
\end{proof}
\subsection{The overall algorithm}
In this subsection, we summarize the specific steps of the proposed CoD algorithm as shown in Algorithm 1.This algorithm aims to provide an integrated fault diagnosis and fault-tolerance control framework for nonlinear systems to meet nested STL specifications. The core idea of the algorithm is to use the dynamic characteristics of the system to predict future states and thus evaluate whether the STL formulas are satisfied. Here are the detailed steps of the algorithm:
\begin{algorithm}[H]
\caption{CoD of FD and FTC}
\label{alg:CoD_FD_FTC}
\begin{itemize}
    \item \textbf{Step 1.} Design control law.
    \begin{itemize}
        \item 1-1. Design STLT based on nested STL.
    \end{itemize}
    \item \textbf{Step 2.} Design fault detection observer.
    \begin{itemize}
        \item 2-1. Design fault tolerant feasible sets.
        \item 2-2. Design the output of observer.
    \end{itemize}
    \item \textbf{Step 3.} Design quadratic program.
    \begin{itemize}
        \item 3-1. Encode time for STLT.   
        \item 3-2. Design CBF for each temporal fragment. 
        \item 3-3. Design fault-tolerant control recursive feasibility for control strategy.
        \item 3-4. Solve quadratic program.
    \end{itemize}
\end{itemize}
\end{algorithm}
Fault Detection Observer Design (FD Observer Design):Initially, a fault detection observer is designed. The pumathcal Xose of this observer is to use the system's dynamic characteristics to predict future states, thereby assessing the satisfaction of STL formulas. This step is the core of fault detection (FD), aiming to verify whether the system's behavior complies with the desired specifications.

Evaluation of FD Performance (Evaluation of FD):After designing the fault detection observer, a function is constructed to evaluate the performance of fault detection. This step ensures that the fault detection mechanism can effectively identify and respond to system failures.

Controller Design (Controller Design):Based on the system model, a controller is designed using Control Barrier Functions (CBFs). The core of the controller design is to ensure that the system state remains within certain sets during specific time intervals, providing explicit guidelines for the CBF design.

Control Strategy Synthesis (Control synthesis):To address the challenge of ensuring a feasible solution in dynamic optimization control problems, the concept of fault-tolerant control recursive feasibility is introduced. The input design is transformed into a control strategy by solving a Quadratic Programming (QP) problem. This step translates the controller design into specific control inputs to achieve stable operation and performance optimization of the system.

\section{SIMULATION STUDY}
In this section, we apply the proposed fault diagnosis and fault-tolerant control co-design approach to two case studies with an integrator dynamic model and a nonholonomic mobile robots modeled by kinematic unicycles.
\begin{figure}[htbp]
\centering	\includegraphics[width=0.5\textwidth]{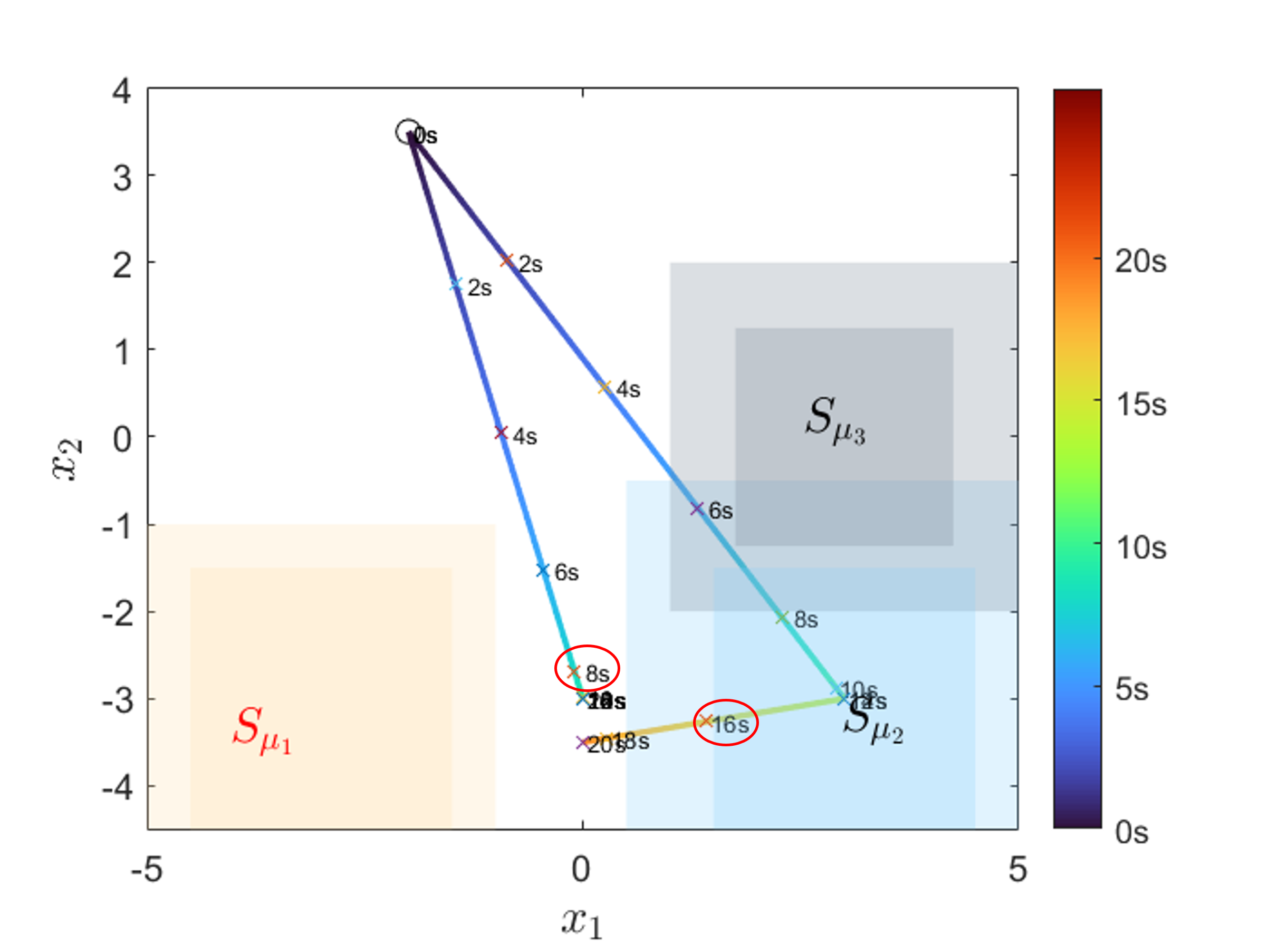}
	\caption{\footnotesize Two trajectories for the single integrator dynamics system. The yellow, blue, and gray regions denote the target regions \( S_{\mu_1} \), \( S_{\mu_2} \), and \( S_{\mu_3} \), respectively, while the light yellow, light blue, and light gray regions represent their corresponding fault-tolerant feasible sets \( X'_{t_k} \). The red circle marks a state \( x_{t_k} \notin X'_{t_k} \), indicating that the monitor can predict task failure in advance.}
 \label{Fig:3}
\end{figure}
\subsection{Integrator model}
We use an integrator model with a sample period of 1 s to model the dynamics of the vehicle

\begin{equation}
\bar{x}_{k+1} = \begin{bmatrix} 1 & 0 \\ 0 & 1 \end{bmatrix} x_k + \begin{bmatrix} 1 & 0 \\ 0 & 1 \end{bmatrix} u_k,
\end{equation}
where state is $x_k=[x_1, x_2]$, and control input is $u_k=[u_x, u_y]$, the control input set $u_k=\{u: |u_x|\le 1, |u_y|\leq 1\}$, and the  working space is $\mathbb{S}_{\mu_1}=\{x\in \mathbb{R}^2\mid x_1 \in [-4.5\ {-1.5}] \wedge x_2 \in [-4.5\ {-1.5}] \}$,  $\mathbb{S}_{\mu_2}=\{x\in \mathbb{R}^2\mid x_1 \in [1.5\ {4.5}] \wedge x_2 \in [-4.5\ {-1.5}] \}$, and $\mathbb{S}_{\mu_3}=\{x\in \mathbb{R}^2\mid x_1 \in [1.75\ {4.25}] \wedge x_2 \in [-1.25\ {1.25}] \}$. The STL task specification is given by $\varphi=\mathsf{G}_{[0, 16]}\mathsf{F}_{[2, 10]}\mu_1 \vee \mathsf{F}_{[10, 14]}( \mu_2  \mathsf{U}_{[5, 10]}\mu_3)$. In this simulation, we run the high-level MPC planner at 5Hz and the low-level controller at 10kHz with parameters d = 0.6, c = 0.005 and T'=0.2.

An important observation is that, for integrator dynamics and a given set node $\mathbb{X}{\varphi_1}$, the sets $\mathcal{R}^M(\mathbb{X}_{\varphi_1}, [a, b])$ and $\overline{\mathcal{R}^m(\overline{\mathbb{X}_{\varphi_1}}, [a, b])}$, which represent the set nodes obtained via the temporal operators $\mathsf{F}{[a, b]}$ and $\mathsf{G}_{[a, b]}$, respectively, exhibit monotonic growth with respect to the input set $u_k$. Consequently, it follows that
\begin{equation}
    \begin{aligned}
      & \mathbb{X}_5=\mathbb{S}_{\mu_1}, \mathbb{X}_8=\mathbb{S}_{\mu_2}, \mathbb{X}_9=\mathbb{S}_{\mu_3},\\
   & \mathbb{X}_4=\{x\in \mathbb{R}^2\mid x_1 \in [1.5\ {4.5}] \wedge x_2 \in [-4.5\ {-1.5}] \},\\
  & \mathbb{X}_3=\{x\in \mathbb{R}^2\mid x_1 \in [-14.5\ {8.5}] \wedge x_2 \in [-14.5\ {8.5}] \},\\
  & \mathbb{X}_2=\{x\in \mathbb{R}^2\mid x_1 \in [-12.5\ {18.5}] \wedge x_2 \in [-18.5\ {12.5}] \}, \\
  & \mathbb{X}_1=\{x\in \mathbb{R}^2\mid x_1 \in [-14.5\ {8.5}] \wedge x_2 \in [-14.5\ {8.5}] \},\\
  & \mathbb{X}_0=\{x\in \mathbb{R}^2\mid x_1 \in [-12.5\ {18.5}] \wedge x_2 \in [-18.5\ {12.5}]  \text{ or }  \\
  & \hspace{1cm} x\in \mathbb{R}^2\mid x_1 \in [-14.5\ {8.5}] \wedge x_2 \in  [-14.5\ {8.5}]\}. \\
    \end{aligned}
\end{equation}
Here $\mathbb{X}_0, ...,\mathbb{X}_5$ are subsets of what one could obtain with the input set $u_k$.
Let the temporal fragments be denoted as $f_1=\mathsf{G}_{[0, 16]} \mathbb{X}_3, f_2=\mathsf{F}_{[2, 10]} \mathbb{X}_{5}, f_3=\mathsf{F}_{[0, 14]} \mathbb{X}_4, f_4=\mathsf{G}_{[0, 10]} \mathbb{X}_{8}, f_5=\mathsf{F}_{[5, 10]} \mathbb{X}_{9} $, with their corresponding control barrier functions $\mathfrak{b}_1, \dots, \mathfrak{b}_5$.
\noindent
\begin{figure}
\centering	\includegraphics[width=0.5\textwidth]{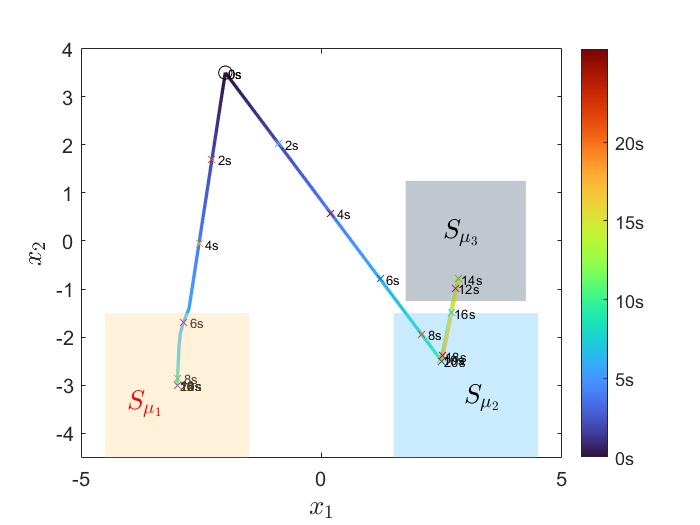}
	\caption{\footnotesize Two trajectories of a mobile robot with single integrator dynamics are synthesized using the proposed method under the STL specification $\varphi=\mathsf{G}_{[0, 16]}\mathsf{F}_{[2, 10]}\mu_1 \vee \mathsf{F}_{[10, 14]}( \mu_2  \mathsf{U}_{[5, 10]}\mu_3)$. The yellow, blue, and gray regions denote the target regions \( S_{\mu_1} \), \( S_{\mu_2} \), and \( S_{\mu_3} \). }
 \label{Fig:5}
\end{figure}
Before starting the online monitoring process, we first compute the fault tolerant feasible sets by theorem 2. The offline computation results are shown in Fig. 3. Specifically, the yellow, gray, and blue regions represent target regions $\mathbb{S}_{\mu_1}$, $\mathbb{S}_{\mu_2}$, and $\mathbb{S}_{\mu_3}$, respectively, while the larger areas $ X_{t_k}^{I'}$ in light yellow, light gray, and light blue denote the corresponding fault tolerant feasible regions. For three complete paths in the example, since $x_{t_k} \notin X_{t_k}^{I'} $ (The red circle is shown in Figure 3),  we can conclude immediately that the formula will be violated inevitably since there exists no controller under which the STLT formula can be satisfied.

 Based on the system model, a controller is designed using CBFs. By applying equation \eqref{eq:timedomain}, one can derive the initial starting time interval, duration, and time domain of the associated CBFs:
\begin{itemize} 
\itemindent=-12pt
    \item $[\underline{t}_s(\mathbb{X}_{3}), \bar{t}_s(\mathbb{X}_{3})]=[0, 16], \mathcal{D}(\mathbb{X}_{3}) = 16,[\underline{t}_{\mathfrak{b}_1}, \bar{t}_{\mathfrak{{b}}_1}]=[0, 16]$;
     
	\item $[\underline{t}_s(\mathbb{X}_{5}), \bar{t}_s(\mathbb{X}_{5})]=[2, 18], \mathcal{D}(\mathbb{X}_{5}) = 8,[\underline{t}_{\mathfrak{b}_2},\bar{t}_{\mathfrak{{b}}_2}] = [2, 26]$;
 
	\item $[\underline{t}_s(\mathbb{X}_{4}), \bar{t}_s(\mathbb{X}_{4})]=[10, 14], \mathcal{D}(\mathbb{X}_{4}) = 0,[\underline{t}_{\mathfrak{b}_3},\bar{t}_{\mathfrak{{b}}_3}] = [10, 14]$;
 
	\item $[\underline{t}_s(\mathbb{X}_{8}), \bar{t}_s(\mathbb{X}_{8})]=[10, 14], \mathcal{D}(\mathbb{X}_{8}) = 10,[\underline{t}_{\mathfrak{b}_4},\bar{t}_{\mathfrak{{b}}_4}] = [10, 24]$;
 
	\item $[\underline{t}_s(\mathbb{X}_{9}), \bar{t}_s(\mathbb{X}_{9})]=[15, 24], \mathcal{D}(\mathbb{X}_{9}) = 0,[\underline{t}_{\mathfrak{b}_5},\bar{t}_{\mathfrak{{b}}_5}] = [15, 24]$.
\end{itemize}

Considering the velocity constraints, the initial CBFs are designed as follows:
\begin{equation}  \label{eq:cbfs_single_integrator}
\begin{aligned}
    & \mathfrak{{b}}_1(x,t) =  (11.5 - t)^2 - \max\{|x_1 + 3|, |x_2 + 3|\}^2, t\in [0,16]; \\
    & \mathfrak{{b}}_2(x,t) =  \begin{cases}
    (11.5 - t)^2 - \max\{|x_1 + 3|, |x_2 + 3|\}^2,\\ t\in [2,18]; \\
    3^2 - \max\{|x_1 + 3|, |x_2 + 3|\}^2,  t\in [18,26];
    \end{cases} \\
     & \mathfrak{{b}}_3(x,t) =  
    (15.5 - t)^2 - \max\{|x_1 - 3|, |x_2 + 3|\}^2, t\in [10,14]; \\
    & \mathfrak{{b}}_4(x,t) =  \begin{cases}
    (15.5 - t)^2 - \max\{|x_1 - 3|, |x_2 + 3|\}^2, \\ t\in [10,14]; \\
      3^2 - \max\{|x_1 - 3|, |x_2 + 3|\}^2, t\in [14,20];
    \end{cases} \\
    & \mathfrak{{b}}_5(x,t) =  (27 - t)^2 - \max\{|x_1 - 3|, |x_2|\}^2, t\in [15,24].
\end{aligned}
\end{equation}

It is evident that the zero super-level sets of the barriers are square, which either remain static or move at a velocity of $1$. If the robot is about to leave the safe region, i.e., when $\mathfrak{{b}}_i(x,t) = 0$, the robot can always steer itself towards the center with unit velocity, and thus always stay safe.
One could easily verify that, for $i= 1,2,...,5$, 1) $\mathfrak{b}_i(x, t)$ is a valid CBF for the single integrator dynamics in (21); 2)  $\mathfrak{b}_i(x, t) = h_{\mathbb{X}_{f_i}}(x), \forall t\in [\bar{t}_s(\mathbb{X}_{f_i}), {t}_e(\mathbb{X}_{f_i})]$, where $\mathbb{X}_{f_i}$ is the set node in the corresponding temporal fragment $f_i$; 3) $  \mathfrak{b}_i(x,\underline{t}_{\mathfrak{b}_i}) \ge \mathfrak{b}_j(x,\underline{t}_{\mathfrak{b}_i}), \forall x$, where the corresponding temporal fragment $f_j$ is the predecessor of $f_i$. Thus, CBFs in \eqref{eq:cbfs_single_integrator} fulfill the conditions in Sec. III.D. Note that here we calculate the initial CBFs, which will be updated online according to algorithm 4 in \cite{8}. 

Now we demonstrate the numerical results with the proposed CBF-based QP control synthesis scheme in \eqref{QP gen}. In Fig. \ref{Fig:5}, we illustrate  trajectories with time snapshots starting from $(-2,3.5)$, both of which lie within $\mathbb{X}_1 \cap \mathbb{X}_2$. For all the trajectories, the input bound $u_k$ is respected. We observe that every trajectory satisfies the STL specification $\varphi$.
\noindent
\begin{figure}
\centering	\includegraphics[width=0.5\textwidth]{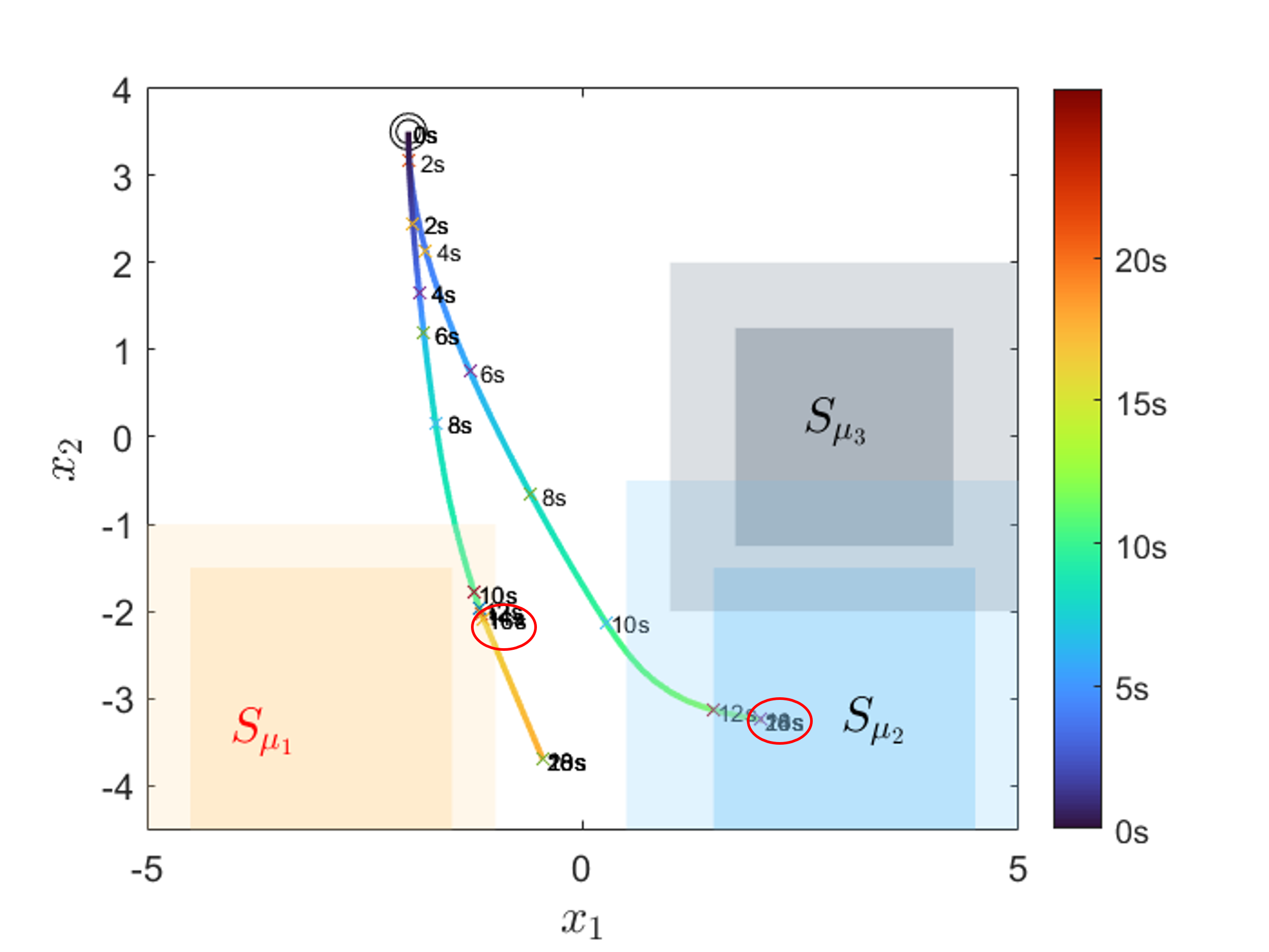}
	\caption{\footnotesize Two trajectories for the unicycle dynamics system. The yellow, blue, and gray regions denote the target regions \( S_{\mu_1} \), \( S_{\mu_2} \), and \( S_{\mu_3} \), respectively, while the light yellow, light blue, and light gray regions represent their corresponding fault-tolerant feasible sets \( X'_{t_k} \). The red circle marks a state \( x_{t_k} \notin X'_{t_k} \), indicating that the monitor can predict task failure in advance. }
 \label{Fig:4}
\end{figure}
\begin{figure}
\centering	\includegraphics[width=0.5\textwidth]{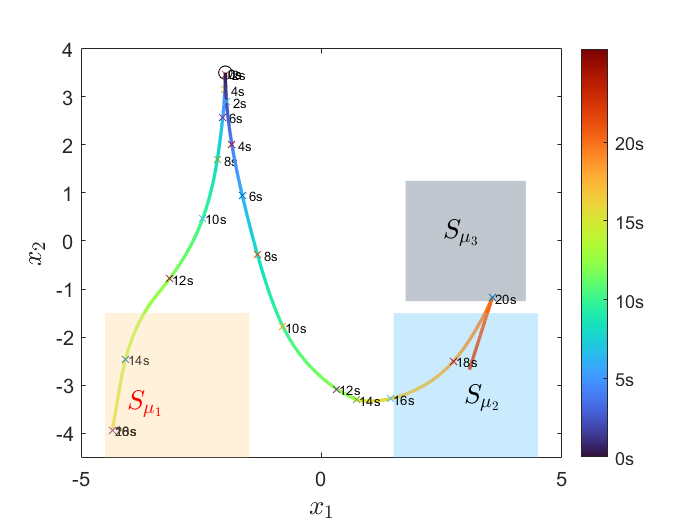}
	\caption{\footnotesize Two trajectories of a mobile robot with unicycle dynamics are synthesized using the proposed method under the STL specification $\varphi=\mathsf{G}_{[0, 16]}\mathsf{F}_{[2, 10]}\mu_1 \vee \mathsf{F}_{[10, 14]}( \mu_2  \mathsf{U}_{[5, 10]}\mu_3)$. The yellow, blue, and gray regions denote the target regions \( S_{\mu_1} \), \( S_{\mu_2} \), and \( S_{\mu_3} \). }
 \label{Fig:6}
\end{figure}
\subsection{Unicycle model}
Consider a mobile robot with a unicycle dynamics 
\begin{equation} \label{eq:unicycle}
    \begin{aligned}
    \dot{x}_1&=v\cos\theta, \\
\dot{x}_2&=v\sin\theta, \\
\dot{\theta}&=\omega,
    \end{aligned}
\end{equation}
and we discretize it wth sampling time 1s, where the state $x = (x_1,x_2, \theta)$,  the control input $u=(v,\omega)$. Here $(x_1,x_2)$ denotes the position, $\theta$ the heading angle, and $v$ the velocity, $\omega$ the turning rate. We assume that the control input $u=(v,\omega) \in u_k = \{ u\mid |v|\le 1, |\omega|\leq 1\}$. The STL task specification is again given by $\varphi=\mathsf{G}_{[0, 16]}\mathsf{F}_{[2, 10]}\mu_1 \vee \mathsf{F}_{[10, 14]}( \mu_2  \mathsf{U}_{[5, 10]}\mu_3)$, where $\mathbb{S}_{\mu_1}=\{x\in \mathbb{R}^2 \times S^1\mid  x_1 \in [-4.5\ {-1.5}] \wedge x_2 \in [-4.5\ {-1.5}]\}$,  $\mathbb{S}_{\mu_2}=\{x\in \mathbb{R}^2 \times S^1\mid  x_1 \in [1.5\ {4.5}] \wedge x_2 \in [-4.5\ {-1.5}]\}$, and $\mathbb{S}_{\mu_3}=\{x\in \mathbb{R}^2 \times S^1 \mid x_1 \in [1.75\ {4.25}] \wedge x_2 \in [-1.25\ {1.25}] \}$.

We note that the temporal fragments, their time encodings,  the time domains for the barrier functions, and the branch choosing guidelines are similar to those as in the case of single integrator dynamics and thus omitted here. The offline computation results are shown in Fig. 5. For three complete paths in the example, since $ x_{t_k} \notin X_{t_k}^{I'} $ (The red circle is shown in Figure 5), we can conclude immediately that the formula will be violated inevitably since there exists no controller under which the STLT formula can be satisfied.

The numerical results with the proposed scheme for unicycle dynamics are shown in Fig. \ref{Fig:6}. Here we illustrate the trajectories with time snapshots starting from $(-2,3.5,\pi/2)$, both of which lie within $\mathbb{X}_1 \cap \mathbb{X}_2$. An intuitive nominal controller similar to the single integrator case is also utilized in this example. For all the trajectories, the input bound $u_k$ is respected. Again, we observe that every trajectory satisfies the STL specification $\varphi$.

\section{CONCLUSIONS}

In this paper, we presented a novel framework for the collaborative design of Fault Diagnosis (FD) and Fault Tolerant Control (FTC) tailored for safety-critical Cyber-Physical Systems (CPS). By integrating both FD and FTC processes, the framework enhances the precision of fault detection while maintaining system fault tolerance capabilities. The approach was structured around four key steps: controller design, fault detector design, performance evaluation of fault detection, and the synthesis of a fault-tolerant control strategy.

The effectiveness of the proposed approach was validated through simulation studies, which demonstrated the framework’s ability to detect faults rapidly and with high accuracy, while simultaneously ensuring the stability and reliability of the fault-tolerant control system. This collaborative FD and FTC design framework offers significant potential for enhancing the resilience and safety of CPS, particularly in applications where performance under fault conditions is critical.


\bibliographystyle{unsrt}        
\bibliography{autosam.bib}           



\appendix
\end{document}